\title{Achieving High Coverage for Floating-point Code\\ via Unconstrained Programming (Extended Version)} 
\newcommand{\removelatexerror}{\let\@latex@error\@gobble}
\newcommand{\etal}{\hbox{\emph{et al.}}\xspace}
\newcommand{\eg}{\hbox{\emph{e.g.}}\xspace}
\newcommand{\ie}{\hbox{\emph{i.e.}}\xspace}
\newcommand{\etc}{\hbox{\emph{etc.}}\xspace}
\newcommand{\aka}{\hbox{{a.k.a.}}\xspace} 
 \newcommand{\FP}{\mathbbm{F}}
\newcommand{\mydef}{\overset{\text{def}}{=}}
\newtheorem{thm}{Theorem}[section]
\newtheorem{lem}[thm]{Lemma}
\theoremstyle{definition}
\newtheorem{definition}[thm]{Definition}
 \newtheorem{remark}[thm]{Remark}
\newcommand{\Real}{\mathbbm{R}}
\newcommand{\energy}{f}
\definecolor{light-gray}{gray}{0.8}
\newcommand{\dom}[1]{\operatorname{dom}(#1)\xspace}
\newcommand{\FOO}{{\texttt {FOO}}\xspace}
\newcommand{\FOOI}{{\texttt {FOO\_I}}\xspace}
\newcommand{\FOOR}{{\texttt {FOO\_R}}\xspace}
\newcommand{\loader}{{\texttt {loader}}\xspace}
\newcommand{\librso}{{\texttt {libr.so}}\xspace}
\newcommand{\mcmc}{{\texttt {MCMC}}\xspace}
\newcommand{\pen}{{\mathnormal{pen}}\xspace}
 \newcommand{\myr}{\texttt {r}\xspace}
\newcommand{\about}[1]{\paragraph{}\noindent{{\underline{\sc  #1}}}}
\renewcommand{\about}[1]{}
\newcommand{\niter}{\mathnormal {n\_iter}\xspace}
\newcommand{\nStart}{\mathnormal {n\_start}\xspace}
\newcommand{\LM}{\texttt {LM}\xspace}
\newcommand{\lmin}{\texttt {LM}\xspace}
\newcommand{\dist}{\mathnormal{d}}
\newcommand{\Integer}{\mathbbm{Z}}
\newcommand{\slug}{\hbox{\kern1.5pt\vrule width2.5pt height6pt depth1.5pt\kern1.5pt}}
\def\xskip{\hskip 7pt plus 3pt minus 4pt}
\newdimen\algindent
\newif\ifitempar \itempartrue 
\def\algindentset#1{\setbox0\hbox{{\bf #1.\kern.25em}}\algindent=\wd0\relax}
\def\algbegin #1 #2{\algindentset{#21}\alg #1 #2} 
\def\aalgbegin #1 #2{\algindentset{#211}\alg #1 #2} 
\def\alg#1(#2). {\medbreak 
  \noindent{\bf#1}({\it#2\/}).\xskip\ignorespaces}
\def\algstep#1.{\ifitempar\smallskip\noindent\else\itempartrue
  \hskip-\parindent\fi
  \hbox to\algindent{\bf\hfil #1.\kern.25em}%
  \hangindent=\algindent\hangafter=1\ignorespaces}
\newcommand{\coverme}{{\textnormal{CoverMe}}\xspace}
\newcommand{\fdlibm}{\textnormal{Fdlibm}}
\DeclareMathOperator{\Explored}{Saturate}
\newcommand{\myT}{_{\mathnormal T}}
\newcommand{\myF}{_{\mathnormal F}}
\newcommand{\mylhs}{\mathnormal{a}}
\newcommand{\myrhs}{\mathnormal{b}}
\newcommand{\myop}{\mathnormal{op}}
\newcommand{\myX}{\mathnormal{X}}
\newcommand{\Paragraph}[1]{\paragraph{\upshape #1}}
\newcommand{\xloc}{x_{\textnormal{L}}}
\newcommand{\xpro}{\widetilde{\xloc}}
\begin{document}

\maketitle

%


\begin{abstract}

Achieving high code coverage is essential in testing, which gives us
confidence in code quality. Testing floating-point code usually
requires painstaking efforts in handling floating-point constraints,
\eg, in symbolic execution. This paper turns the challenge of testing
floating-point code into the opportunity of applying unconstrained
programming --- the mathematical solution for calculating function minimum
points over the entire search space. Our core insight is to derive
 a representing function from the floating-point program, any of whose minimum
points is a test input guaranteed to exercise a new branch of the
tested program. This guarantee allows us to achieve high coverage of
the floating-point program by repeatedly minimizing the representing
function.

We have realized this approach in a tool called CoverMe and conducted
an extensive evaluation of it on Sun's C math library. Our evaluation results show
that CoverMe achieves, on average, 90.8\% branch coverage in 6.9 seconds, drastically
outperforming our compared tools: (1) Random testing, (2) AFL, a highly
optimized, robust fuzzer released by Google, and (3) Austin, a
state-of-the-art coverage-based testing tool designed to support
floating-point code.

\end{abstract}

\section{Introduction}
\label{sect:intro}
Test coverage criteria attempt to quantify the quality of test
data. Coverage-based testing~\cite{DBLP:dblp_books/daglib/0012071} has
become the state-of-the-practice in the software industry. The
higher expectation for software quality and the shrinking development cycle
have driven the research community to develop a spectrum of
automated testing techniques for achieving high code coverage.


A significant challenge in coverage-based testing lies in the testing
of numerical code, \eg, programs with floating-point arithmetic,
non-linear variable relations, or external function calls, such as
logarithmic and trigonometric functions.  Existing solutions include
random
testing~\cite{Bird:1983:AGR:1662311.1662317,DBLP:conf/pldi/GodefroidKS05},
symbolic
execution~\cite{King:1976:SEP:360248.360252,Boyer:1975:SFS:800027.808445,Clarke:1976:SGT:1313320.1313532,DBLP:journals/cacm/CadarS13},
and various search-based
strategies~\cite{Korel:1990:AST:101747.101755,McMinn:2004:SST:1077276.1077279,Baars:2011:SST:2190078.2190152,Lakhotia:2010:FSF:1928028.1928039},
which have found their way into many mature
implementations~\cite{Cadar:2008:KUA:1855741.1855756,DBLP:conf/kbse/BurnimS08,Sen:2005:CCU:1081706.1081750}.
Random testing is easy to employ and fast, but ineffective in finding
deep semantic issues and handling large input spaces; symbolic
execution and its variants can perform systematic path exploration,
but suffer from path explosion and are weak in dealing with complex
program logic involving numerical constraints.
\Paragraph{Our Work}
This paper considers the problem of coverage-based testing for
floating-point code and focuses on the coverage of program
branches.  We turn the challenge of testing floating-point programs
into the opportunity of applying unconstrained programming ---
the mathematical solution for calculating function minima  over the
entire search space~\cite{Zoutendijk76,Minoux86}. 

Our approach has two unique features. First, it introduces the concept
of  \emph{representing function}, which reduces the branch coverage
based testing problem to the unconstrained programming problem.
Second, the representing function is specially designed to achieve the following
theoretical guarantee: Each minimum point of the representing function
is an input to the tested floating-point program, and the input
necessarily triggers a new branch unless all branches have been
covered.  This guarantee is critical not only for the soundness of our
approach, but also for its efficiency --- the unconstrained programming
process is designed to cover only new branches; it does not waste
efforts on covering already covered branches.




We have implemented our approach into a tool called CoverMe. CoverMe first
derives the representing function from the program under test. Then,
it uses an existing unconstrained programming algorithm to compute
the minimum points.  Note that the theoretical guarantee mentioned
above allows us to apply any unconstrained programming algorithm as a black
box. Our implementation uses an off-the-shelf Monte Carlo Markov Chain
(MCMC)~\cite{DBLP:dblp_journals/ml/AndrieuFDJ03} tool.

CoverMe has achieved high or full branch coverage for the tested
floating-point programs.  Fig.~\ref{fig:intro:fdlibm} lists the
program {\tt s\_tanh.c}  from our benchmark suite
$\fdlibm$~\cite{fdlibm:web}.  The program takes a {\tt double} input.
In Line 3, variable {\tt jx} is assigned with the high word of {\tt x}
according to the comment given in the source code; the right-hand-side
expression in the assignment takes the address of {\tt x} ({\tt \&x}),
cast it as a pointer-to-int ({\tt int*}), add 1, and dereference the
resulting pointer.  In Line 4, variable {\tt ix} is assigned with {\tt
  jx} whose sign bit is masked off.  Lines 5-15
are two nested conditional statements on {\tt ix} and {\tt jx}, which
contain 16 branches in total according to Gcov~\cite{gcov:web}.
Testing this type of programs is beyond the capabilities of traditional
symbolic execution tools such as
Klee~\cite{Cadar:2008:KUA:1855741.1855756}. CoverMe achieves full
coverage within 0.7 seconds, dramatically outperforming our
compared tools, including random testing, Google's AFL, and Austin
(a tool that combines symbolic execution and search-based
heuristics). See details in Sect.~\ref{sect:eval}.




\begin{figure}
\lstset{xleftmargin=0.5cm, numbers=left}
\begin{lstlisting}
double tanh(double x){
  int jx, ix;
  jx = *(1+(int*)&x);  // High word of x
  ix = jx&0x7fffffff; 
  if(ix>=0x7ff00000) { 
    if (jx>=0) ...;
    else       ...;
  }
  if (ix < 0x40360000) {	
    if (ix<0x3c800000) ...;
    
    if (ix>=0x3ff00000) ...; 	
    else ...;
  }
  else ...;
  return ...;
}

\end{lstlisting}
\caption{Benchmark program \texttt{s\_tanh.c} taken from $\fdlibm$. }\label{fig:intro:fdlibm}
\end{figure}

\Paragraph{Contributions} 

This work introduces a promising automated testing solution for
programs that are heavy on floating-point computation. Our approach
designs the representing function whose minimum points are guaranteed
to exercise new branches of the floating-point program. This guarantee
allows us to apply any unconstrained programming solution as a
black box, and to efficiently generate test inputs for
covering program branches.

Our implementation, CoverMe, proves to be highly efficient and effective. It 
achieves 90.8\% branch coverage on average, which is substantially
higher than those obtained by random testing (38.0\%),
AFL~\cite{afl:web} (72.9\%), and Austin~\cite{lakhotia2013austin}
(42.8\%).




\Paragraph{Paper Outline}
We structure the rest of the paper as follows.
Sect.~\ref{sect:background} presents background material on
unconstrained programming.  Sect.~\ref{sect:overview} gives an
overview of our approach, and Sect.~\ref{sect:algo} presents the
algorithm. Sect.~\ref{sect:implem} describes our implementation
CoverMe, and Sect.~\ref{sect:eval} describes our evaluation.
Sect.~\ref{sect:relwork} surveys related work and
Sect.~\ref{sect:conc} concludes the paper. For completeness,
Sect.~\ref{sect:untested}-\ref{sect:incompleteness}  provide additional  details on our approach.

\Paragraph{Notation} We write $\FP$ for floating-point
numbers, $\Integer$ for integers, $\Integer_{>0}$ for strictly
positive integers.  we use the ternary operation $B~?~a~:~a'$ to denote
an evaluation to $a$ if $B$ holds, or $a'$ otherwise. The lambda terms in the form of $\lambda x.f(x)$ may denote mathematical function $f$ or its machine implementation according to the given context.

\section{Background}
\label{sect:background}

This section presents the definition and  algorithms of unconstrained
programming that will be used in this paper. 
 As mentioned in Sect.~\ref{sect:intro}, we will treat
the unconstrained programming algorithms as black boxes. 




\Paragraph{Unconstrained Programming}
 We formalize  unconstrained programming  as the problem below~\cite{dennis1996numerical}:
\begin{equation*} \label{eq:background:1}
 \begin{aligned}
 & {\text{Given}}
 & & f:\Real^n\rightarrow \Real \\
 & \text{Find}
 & & x^*\in \Real^n \text{ for which } f(x^*)\leq f(x) \text{ for all } x\in\Real^n 
 \end{aligned}
 \end{equation*}
 where $f$ is  the objective function;  $x^*$, if found, is  called a
 minimum point; and $f(x^*)$ is the minimum. 
 An example is
 \begin{align}
f(x_1,x_2)=(x_1-3)^2+(x_2-5)^2,   
 \end{align}
 which has the minimum point  $x^*=(3,5)$. 

\Paragraph{Unconstrained Programming Algorithms}
We consider two kinds of algorithms, known as local optimization and
global optimization.  Local optimization focuses on how functions are
shaped near a given input and where a minimum can be found at local
regions.  It usually involves standard techniques such as Newton's or
the steepest descent methods~\cite{citeulike:2621649}.  Fig.~\ref{fig:background:1}(a)
shows a common local optimization method with the objective function
$f(x)$ that equals $0$ if $x\leq 1$, or $(x-1)^2$ otherwise.  The
algorithm uses tangents of $f$ to converge to a minimum point
quickly. In general, local optimization is usually fast.  If the
objective function is smooth to some degree, the local optimization
can deduce function behavior in the neighborhood of a particular point
$x$ by using information at $x$ only (the tangent here).

Global optimization for unconstrained programming searches for minimum
points over $\Real^n$.  Many global optimization algorithms have been
developed. This work uses  Monte Carlo Markov Chain
(MCMC)~\cite{DBLP:dblp_journals/ml/AndrieuFDJ03}. MCMC is a sampling
method that targets (usually unknown) probability distributions. A
fundamental fact is that MCMC sampling follows the target distributions
asymptotically, which is formalized by the lemma below. For
simplicity, we present the lemma in the form of discrete-valued
probabilities~\cite{DBLP:dblp_journals/ml/AndrieuFDJ03}.
 
\begin{lem}
  Let $x$ be a random variable, $A$ be an enumerable set of the
  possible values of $x$,   $f$ be a target probability distribution
  for $x$, \ie, the probability of $x$ taking value $a\in A$ is $f(a)$.
  Then,  an MCMC sampling sequence $x_1, \ldots, x_n,\ldots$ satisfies the property that
  $Prob(x_n =a) \rightarrow f(a)$.
\label{lem:pdf}
\end{lem}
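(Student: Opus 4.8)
The plan is to recognize this as the standard ergodic convergence theorem for Markov chains and to prove it in three movements: (i) identify the Markov structure that MCMC imposes, (ii) show the target $f$ is a stationary distribution of that chain, and (iii) invoke an ergodic (convergence) theorem to pass from stationarity to the claimed limit $Prob(x_n = a) \rightarrow f(a)$. Since the lemma is stated for a discrete (enumerable) value set $A$, I would model the sampler by its transition matrix $P$, where $P_{ab} = Prob(x_{n+1} = b \mid x_n = a)$ is the one-step transition probability produced by the MCMC update rule --- for concreteness, a Metropolis--Hastings kernel built from a proposal $q$ together with an accept/reject step.

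First I would establish \emph{detailed balance} (reversibility): the MCMC acceptance probability is engineered precisely so that $f(a)\,P_{ab} = f(b)\,P_{ba}$ holds for every pair $a,b \in A$. This reduces to a short case analysis on the Metropolis--Hastings acceptance rule $\alpha(a,b)=\min\bigl(1,\, f(b)q(b,a)/(f(a)q(a,b))\bigr)$, splitting on which side attains the minimum. Summing detailed balance over $a$ then yields stationarity, $\sum_{a} f(a) P_{ab} = f(b) \sum_a P_{ba} = f(b)$, so $f$ is a stationary distribution of $P$.

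Next I would add the two regularity hypotheses that the MCMC construction is designed to guarantee --- \emph{irreducibility} (every state in $A$ is reachable from every other, ensured by a proposal $q$ with full support) and \emph{aperiodicity} (ensured by a positive rejection probability, which gives $P_{aa}>0$). Under these, the stationary distribution is unique and the fundamental limit theorem for Markov chains applies: in the finite case this is the Perron--Frobenius theorem for a primitive stochastic matrix, and for countable $A$ it is the ergodic theorem for positive-recurrent aperiodic chains. Either way one obtains $P^{\,n}_{ba} \rightarrow f(a)$, uniformly in the starting state $b$. The conclusion then follows by writing $Prob(x_n = a) = \sum_{b} Prob(x_1 = b)\, P^{\,n-1}_{ba}$ and passing to the limit inside the (absolutely convergent) sum.

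The main obstacle is not the algebra of detailed balance, which is routine, but the fact that the limit \emph{fails} without the ergodicity hypotheses: a reducible chain can remain trapped in a sub-region of $A$, and a periodic chain oscillates without ever converging. Thus the delicate part of a rigorous argument is to state and verify that the particular MCMC kernel in use is irreducible and aperiodic on $A$, and --- when $A$ is infinite --- to confirm positive recurrence rather than merely the existence of a stationary measure. Given the paper's stated intent to treat the sampler as a black box and its citation of a standard MCMC reference, I would expect the actual proof to invoke the classical convergence theorem rather than to reprove it from first principles.
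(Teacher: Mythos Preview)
Your sketch is a correct outline of the standard Markov-chain convergence argument (detailed balance $\Rightarrow$ stationarity, then irreducibility and aperiodicity $\Rightarrow$ ergodic convergence via Perron--Frobenius or the countable-state ergodic theorem). You also correctly flag the hypotheses that the bare statement suppresses.

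However, the paper gives \emph{no proof} of this lemma at all. It appears in the Background section as a known fact, attributed by citation to a standard MCMC reference (Andrieu \emph{et al.}), and is explicitly framed as background material that the paper treats as a black box. So there is nothing to compare your argument against: you have supplied a genuine proof sketch where the paper simply imports the result. Your closing sentence anticipated exactly this.
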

For example, consider the target distribution of coin tossing with
$0.5$ probability of getting a head. An MCMC sampling is a sequence
of random variables $x_1$,\ldots, $x_n,\ldots$, such that the
probability of $x_n$ being head converges to $0.5$.


Using MCMC to solve unconstrained programming problems provides
multiple advantages in practice. First, Lem.~\ref{lem:pdf} ensures
that MCMC sampling can be configured to attain the minimum points with
higher probability than the other points.  Second, MCMC integrates
well with local optimization. An example is the Basinhopping
algorithm~\cite{leitner1997global} used in Sect.~\ref{sect:implem}.
Third, MCMC techniques are robust; some variants can even handle high
dimensional problems~\cite{robbins1951stochastic} or non-smooth
objective functions~\cite{eckstein1992douglas}. Our approach uses
unconstrained optimization as a black box.
Fig.~\ref{fig:background:1}(b) provides a simple example.  Steps
$p_0\rightarrow p_1$, $p_2\rightarrow p_3$, and $p_4 \rightarrow p_5$
employ local optimization; Steps $p_1\rightarrow p_2$ and
$p_3\rightarrow p_4$, known as Monte-Carlo
moves~\cite{DBLP:dblp_journals/ml/AndrieuFDJ03}, avoid the MCMC
sampling being trapped in the local minimum points.

 \begin{figure}[t]
\begin{subfigure}[b]{0.45\linewidth}
   \centering
  \includegraphics[width=\linewidth]{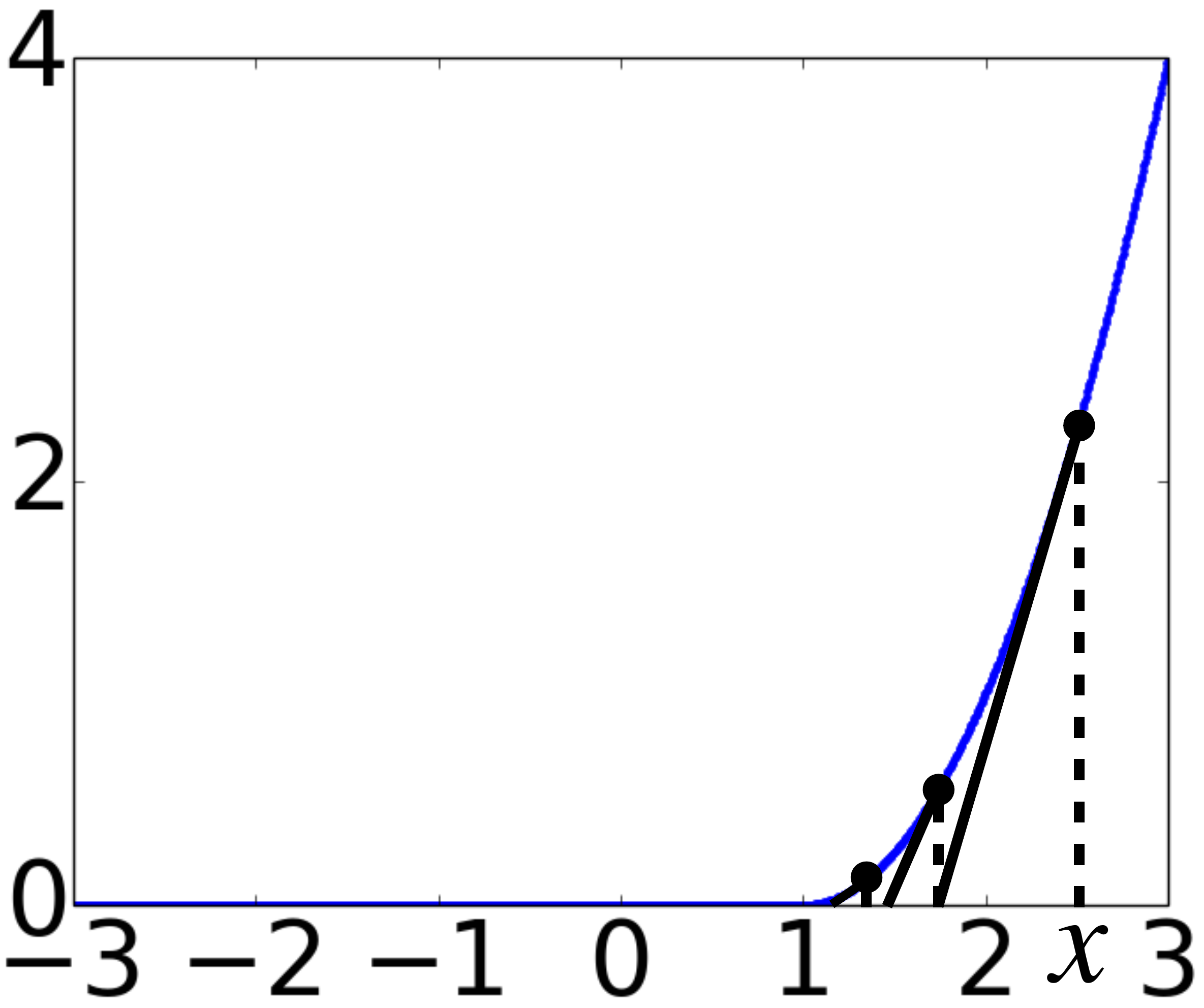}
  \caption{}
\end{subfigure}%
\hfill
\begin{subfigure}[b]{0.46\linewidth}
   \centering
  \includegraphics[width=\linewidth]{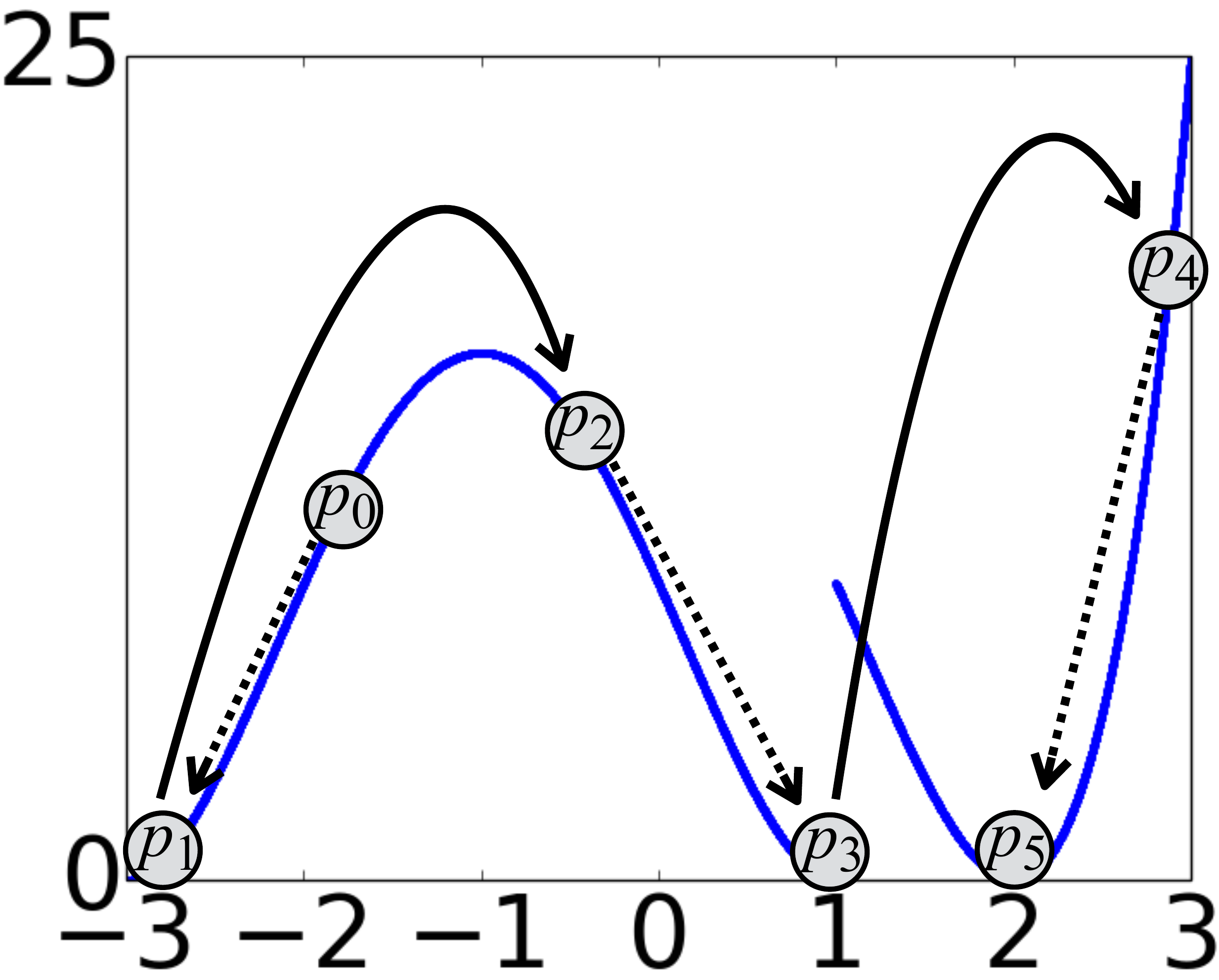}
  \caption{}
\end{subfigure}
\caption{ (a) Local optimization example with objective function
  $\lambda x. x\leq 1~?~0:(x-1)^2$.  The local optimization algorithm
  uses tangents of the curve to quickly converge to a minimum point;
  (b) Global optimization example with objective function
  $\lambda x.x \leq 1~?~((x+1)^2-4)^2:(x^2-4)^2$.  The MCMC method
  starts from $p_0$, converges to local minimum $p_1$, performs a
  Monte-Carlo move to $p_2$ and converges to $p_3$. Then it moves to
  $p_4$ and converges to $p_5$.  }
\label{fig:background:1}
 \end{figure}


\section{Overview}
 \label{sect:overview}





  




%
%
%




\begin{figure}[t]
\centering
\includegraphics[width=1.0\linewidth]{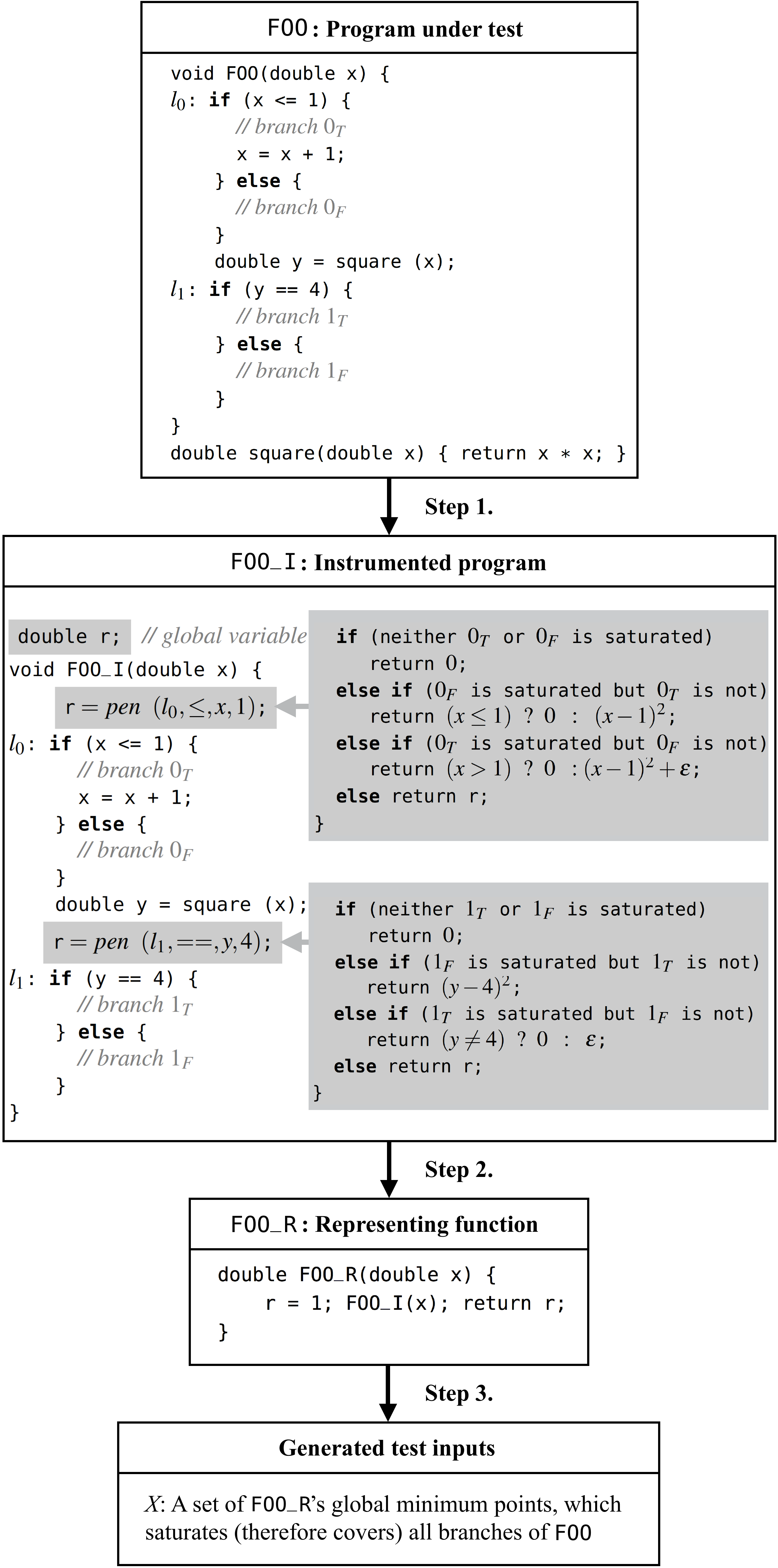}
\caption{An illustration of our approach.  The goal is to find inputs that   saturate (therefore cover) all branches of \FOO, \ie,
  $\{0_T,0_F,1_T,1_F\}$. }
\label{fig:algo:injecting_pen}
\end{figure}

This section states the problem and illustrates our solution. 

\Paragraph{Notation}
  Let \FOO be the program under test with $N$ conditional statements,
  labeled by $l_0$,\ldots,$l_{N-1}$.  Each $l_i$ has a {\tt true}
  branch $i_T$ and a {\tt false} branch $i_F$. We write $\dom{\FOO}$
  to denote the input domain of program $\FOO$.

\subsection{Problem Statement}
\label{sect:overview:pb}

\begin{definition}
  The problem of \emph{branch
  coverage-based testing} aims to find a set of inputs
  $X\subseteq \dom{\FOO}$ that \emph{covers} all branches of
  \FOO.  Here, we say a branch is ``covered'' by $X$ if it is passed
  through by executing \FOO with an input of $X$. 

  We scope the problem with three assumptions. They will be partially
  relaxed in our implementation (Sect. \ref{sect:implem}): 
\begin{itemize}
\item[(a)]  The inputs of  \FOO are floating-point numbers; 
\item[(b)]  Each Boolean condition in \FOO  is an arithmetic comparison 
between two floating-point variables or constants; and 
\item[(c)]   Each branch of  \FOO is feasible, \ie, it is covered by an input of program \FOO. 
\end{itemize}
\label{def:overview:branchcoverage}
\end{definition}






The concept below is crucial. It allows us to rewrite the branch
coverage-based testing problem as defined in
Def.~\ref{def:overview:branchcoverage} to an equivalent, but
easier-to-solve one.

\begin{definition}
  Let $X$ be a set of inputs generated during the testing process.
  We say that a branch is \emph{saturated} by $X$ if the branch itself and
  all its descendant branches, if any, are covered by $X$. Here,  a branch
  $b'$ is called a descendant branch of $b$ if there exists a
   control flow  from $b$ to $b'$.  We write
\begin{align}
\Explored(X)
\end{align}
for the set of branches saturated by $X$. 

\label{def:overview:saturate}
\end{definition}

\noindent \begin{minipage}{0.69\linewidth}
\paragraph{}
The control-flow graph on the right illustrates
Def.~\ref{def:overview:saturate}. Suppose that an input set $X$ covers
$\{0_T,0_F,1_F\}$. Then $\Explored(X)= \{0_F,1_F\}$. Branch $1_T$ is
not saturated because it is not covered; branch $0_T$ is not saturated
neither because its descendant branch $1_T$ is not covered.

  \end{minipage}%
  \begin{minipage}{0.3\linewidth}
\flushright   \includegraphics[width=0.95\linewidth]{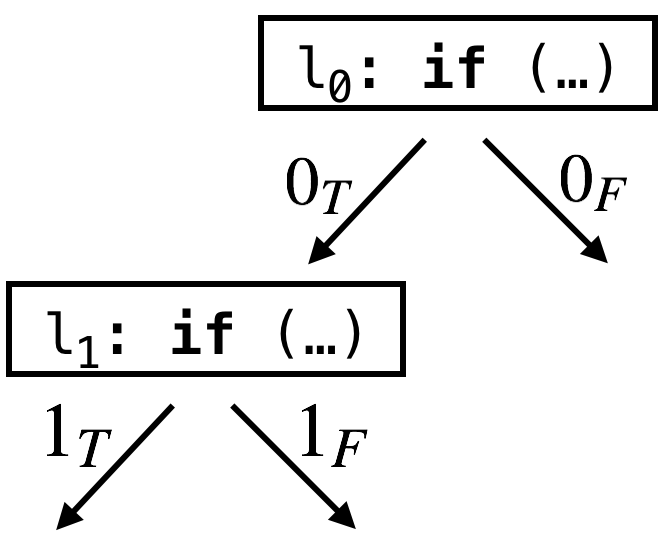}
\end{minipage}



 \paragraph{}

Our approach reformulates the branch coverage-based testing problem
with the lemma below.

\begin{lem} Let $\FOO$ be the program under test. Assume Def.~\ref{def:overview:branchcoverage}(a-c) hold.  Then, a set of
  inputs $X \subseteq \dom{\FOO}$ saturates all $\FOO$’s branches iff
  $X$ covers all $\FOO$'s branches. 

  By consequence, the goal of branch coverage-based testing defined
  in Def.~\ref{def:overview:branchcoverage} can be equivalently stated:
  To find a set of inputs $X\subseteq \dom{\FOO}$ that saturates
  all \FOO's branches.
\label{def:overview:pb}
\end{lem}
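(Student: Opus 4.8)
The plan is to prove the biconditional directly by unfolding the definition of saturation (Def.~\ref{def:overview:saturate}) and treating each direction separately, since the whole content of the claim is the interplay between ``covered'' and ``saturated'' once we quantify over \emph{all} branches. The key structural observation I would isolate up front is that, for a single branch $b$, saturation is by definition strictly stronger than coverage: saturating $b$ requires both that $b$ is covered \emph{and} that every descendant branch of $b$ is covered. Hence the forward implication is essentially free, and the reverse implication is where the hypothesis ``$X$ covers \emph{all} branches'' does the work.

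For the forward direction ($\Rightarrow$), I would assume that $X$ saturates every branch, i.e.\ $\Explored(X)$ contains all of \FOO's branches, and fix an arbitrary branch $b$. Because $b$ is saturated, Def.~\ref{def:overview:saturate} immediately yields that $b$ itself is covered by $X$. Since $b$ was arbitrary, every branch is covered, which is the coverage conclusion of Def.~\ref{def:overview:branchcoverage}. No properties of the descendant relation are needed here beyond the definitional fact that a saturated branch is covered.

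For the reverse direction ($\Leftarrow$), I would assume $X$ covers every branch and again fix an arbitrary branch $b$. The two conditions required for $b\in\Explored(X)$ are that $b$ is covered and that each descendant branch $b'$ of $b$ is covered. The first holds by hypothesis. For the second, I would use the point that every descendant $b'$ is itself a branch of \FOO, so the blanket coverage hypothesis applies to it as well; if $b$ has no descendants, the clause is vacuous. Thus $b$ is saturated, and quantifying over $b$ shows $X$ saturates all branches. Combining the two implications gives the equivalence, and the ``by consequence'' restatement of the testing goal then follows immediately, since ``$X$ saturates all branches'' is exactly the assertion that $\Explored(X)$ equals the full branch set.

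I do not expect a genuine obstacle: the argument is a definitional unfolding plus quantifier manipulation. The only points requiring mild care are (i) the ``if any'' clause, so that leaf branches with no descendants are handled correctly (they are saturated as soon as they are covered), and (ii) the transfer of the coverage hypothesis along the descendant relation, which relies only on descendants being branches of \FOO and not on any acyclicity of the control flow (a branch that is its own descendant through a loop causes no difficulty, since the coverage hypothesis is uniform). I would also remark that assumptions (a)--(c) of Def.~\ref{def:overview:branchcoverage} are not actually invoked in this purely combinatorial equivalence; they are needed later for the representing-function construction rather than for this lemma.
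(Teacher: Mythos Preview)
Your argument is correct and is exactly the natural unfolding of Def.~\ref{def:overview:saturate}. The paper itself states this lemma without proof, treating the equivalence as immediate from the definitions, so there is no alternative approach to compare against; your write-up simply makes explicit what the paper leaves implicit. Your side remark that assumptions (a)--(c) of Def.~\ref{def:overview:branchcoverage} play no role in this particular equivalence is also accurate: the lemma is a purely combinatorial fact about the relation between ``covered'' and ``saturated'' once one quantifies over all branches, and the floating-point and feasibility hypotheses only matter for the later construction of the representing function and for the claim that such an $X$ exists at all.
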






\subsection{Illustration of Our Approach}
\label{sect:overview:example}

Fig.~\ref{fig:algo:injecting_pen} illustrates our approach.  The
program under test has two conditional statements $l_0$ and $l_1$.
Our objective is to find an input set that saturates all branches,
namely, $0_T$, $0_F$, $1_T$, and $1_F$.
Our approach proceeds in three steps:

\Paragraph{Step 1}
We inject the global variable $\myr$ in \FOO, and, immediately before
each control point $l_i$, we inject the assignment
\begin{align}
\myr = \pen
\end{align}
where $\pen$ invokes a code segment with parameters associated with
$l_i$.  The idea of $\pen$ is to capture the \emph{distance of the
  program input from saturating a branch that has not yet been
  saturated}.  Observe that in Fig.~\ref{fig:algo:injecting_pen}, both
injected $\pen$ return different values depending on whether the
branches at $l_i$ are saturated or not. \FOOI denotes the instrumented program.






\Paragraph{Step 2} This step constructs the representing function
that we have mentioned in Sect.~\ref{sect:intro}.  The representing
function is the driver program \FOOR shown in
Fig.~\ref{fig:algo:injecting_pen}. It initializes $\myr$ to $1$,
invokes \FOOI and then returns $\myr$ as the output of \FOOR. That is
to say, $\FOOR(x)$ for a given input $x$ calculates the value of
$\myr$ at the end of executing $\FOOI(x)$.

The key in Steps 1 and 2  is to
design $\pen$ so that \FOOR meets the two  conditions below:
\begin{description}[align=left]
\item [C1.] $\FOOR(x)\geq 0$ for all $x$, and 
\item [C2.]   $\FOOR(x)=0$ iff $x$ saturates a new branch.  In other
words, a branch that has not been saturated by the generated input set
$X$ becomes saturated with $X\cup \{x\}$, \ie,
$\Explored(X)\neq \Explored(X\cup\{x\})$.
\end{description}
Conditions C1 and C2 are essential because they allow us to transform a
branch coverage-based testing problem to an unconstrained programming
problem.   Ideally, we can then saturate all branches of \FOO by
repeatedly minimizing \FOOR as shown in the step below.




\Paragraph{Step 3}
We calculate the minimum points of \FOOR via unconstrained
programming algorithms described in Sect.~\ref{sect:background}.
Typically, we start with an input set $X=\emptyset$ and
$\Explored(X)=\emptyset$. We minimize \FOOR and obtain a minimum point
$x^*$ which necessarily saturates a new branch by condition C2. Then
we have $X=\{x^*\}$ and we minimize \FOOR again which gives another
input $x^{**}$ and $\{x^*,x^{**}\}$ saturates a branch that is not
saturated by $\{x^*\}$.  We continue this process until all branches
are saturated. Note that when Step 3 terminates, $\FOOR(x)$ must be
strictly positive for any input $x$, due to C1 and C2.





\paragraph{}
Tab.~\ref{tab:algo:example_illustration} illustrates a scenario of how
our approach saturates all branches of program \FOO given in
Fig.~\ref{fig:algo:injecting_pen}. Each ``\textbf{\#n}'' below
corresponds to a line in the table.  We write $\pen_0$ and $\pen_1$
to distinguish the two $\pen$ injected at $l_0$ and $l_1$
respectively.  (\textbf{\#1}) Initially, no branch has been saturated.
Both $\pen_0$ and $\pen_1$ set $\myr=0$, and $\FOOR$ returns $0$ for
any input. Suppose $x^*=0.7$ is found as the minimum point.
(\textbf{\#2})  The branch $1\myF$ is now saturated and $1\myT$ is not. Thus, $\pen_1$ sets $\myr = (y-4)^2$.  Minimizing $\FOOR$ 
gives $x^*=-3.0$, $1.0$, or $2.0$. We have illustrated how these minimum points can be computed in unconstrained programming in Fig.~\ref{fig:background:1}(b).  Suppose $x^*=1.0$ is found.
(\textbf{\#3}) Both $1\myT$ and $1\myF$, as well as $0_T$, are
saturated by the generated inputs $\{0.7,1.0\}$.  Thus, $\pen_1$
returns the previous $\myr$ and $\FOOR$ amounts to $\pen_0$, which
returns $0$ if $x>1$ or $(x-1)^2 + \epsilon$ otherwise, where
$\epsilon$ is a small predefined constant (Sect. \ref{sect:algo}).
Suppose $x^* = 1.1$ is found as the minimum point.
(\textbf{\#4}) All branches have been saturated. In this case, both
$\pen_0$ and $\pen_1$ return $\myr$. $\FOOR$ returns $1$ for all $x$
since $\FOOR$ initializes $\myr$ as $1$.  Suppose the minimum found is
$x^*=-5.2$.  It necessarily satisfies $\FOOR(x^*)>0$, which confirms
that all branches have been saturated (due to C1 and C2).

\begin{table}\footnotesize 
\setlength{\tabcolsep}{1.5pt}%
  \centering
  \caption{
    A scenario of how our approach saturates all branches of \FOO by repeatedly minimizing
    \FOOR.  Column ``$\Explored$'':  Branches that have been saturated. Column ``$\FOOR$'': The representing function and its plot. Column ``$x^*$'': The point where $\FOOR$ attains the minimum.  Column ``$\myX$'': Generated test inputs.}\label{tab:algo:example_illustration}. 
\scriptsize
  \begin{tabular}{l ccccc}
\toprule
 \#~~~&  $\Explored$  &$\FOOR$ & &  $x^*$  & $\myX$ \\ \midrule 
1& $\emptyset$ & $\lambda x. 0$ &\raisebox{-.5\height}{\includegraphics[width=.27\linewidth]{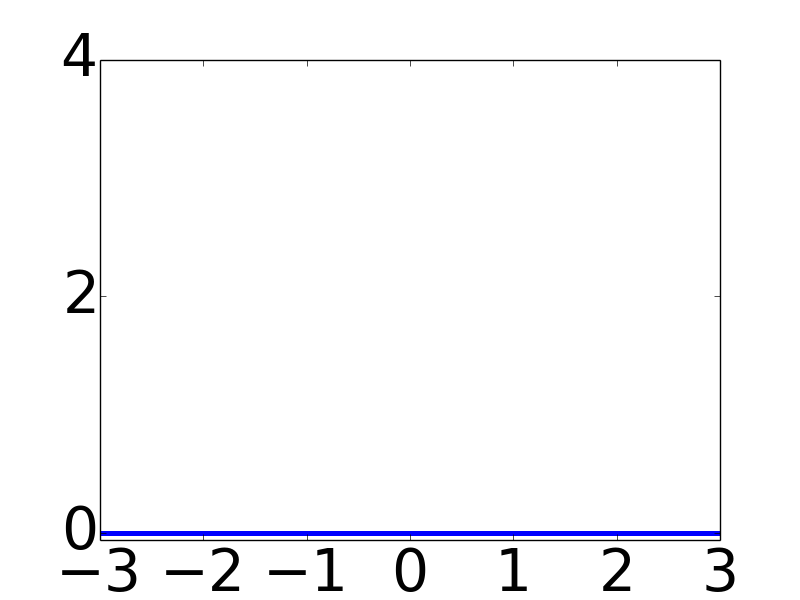}} &$0.7$ & $\{0.7\}$ \\  \arrayrulecolor{light-gray}\hline
2& $\{1\myF\}$ & $\lambda x.\begin{cases}((x+1)^2-4)^2& x\leq 1 \\ (x^2-4)^2 & \text{else} \end{cases}$ &\raisebox{-.5\height}{\includegraphics[width=.27\linewidth]{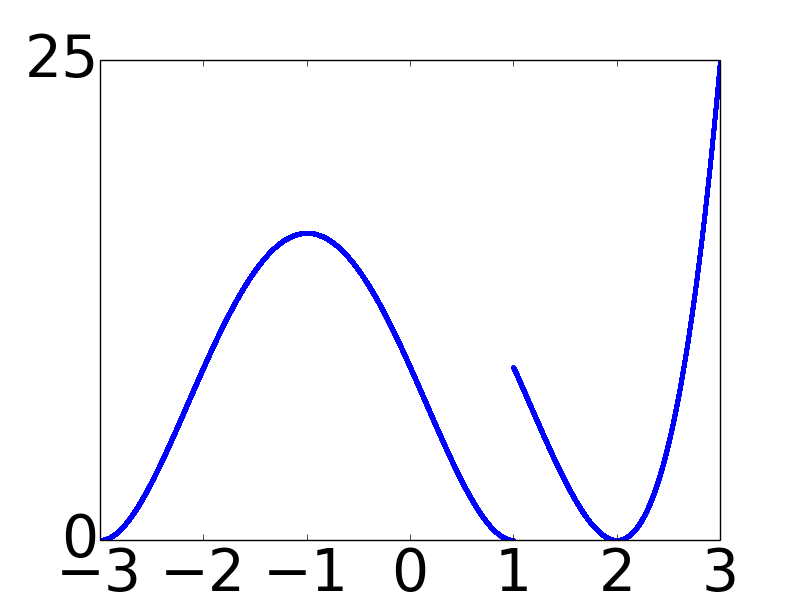}} &$1.0$ & $\{0.7,1.0\}$ \\  \hline
3& \parbox{3em}{$\{0\myT,1\myT,\\ 1\myF\}$} & $\lambda x. \begin{cases}0 & x>1 \\ (x-1)^2+\epsilon &\text{else} \end{cases}$ &\raisebox{-.5\height}{\includegraphics[width=.27\linewidth]{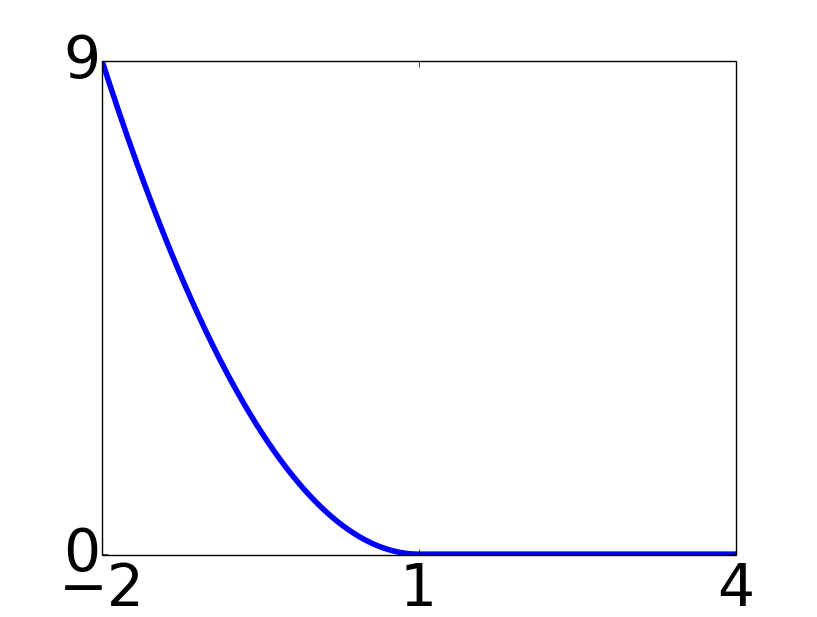}} &$1.1$ & \parbox{3em}{$\{0.7,1.0,$ \\ $1.1\}$} \\  \hline
4&\parbox{3em}{$\{0\myT,1\myT,\\ 0\myF, 1\myF\}$} & $\lambda x. 1$ &\raisebox{-.5\height}{\includegraphics[width=.27\linewidth]{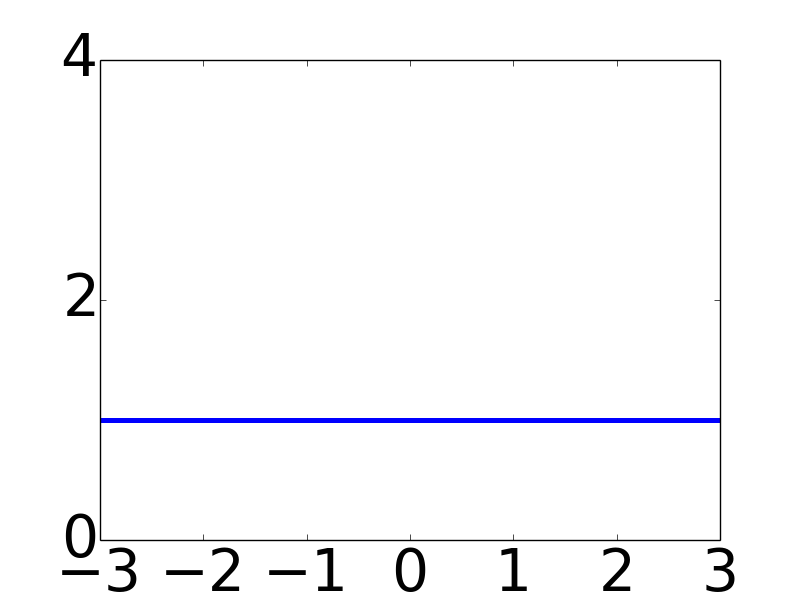}} &$-5.2$ & \parbox{3em}{$\{0.7,1.0,\\1.1,-5.2\}$} \\  
\arrayrulecolor{black}
\bottomrule
  \end{tabular}

\end{table}






 \section{Algorithm}
 \label{sect:algo}
 
We provide details corresponding to the three steps in
Sect.~\ref{sect:overview:example}. The algorithm is summarized in
Algo.~\ref{theory:algo:coverme}.

\Paragraph{Algorithm for Step 1} 
The outcome of this step is the instrumented program \FOOI.  As
explained in Sect.~\ref{sect:overview:example}, the essence is to
inject the variable $\myr$ and the assignment $\myr=\pen$ before each conditional
statement (Algo.~\ref{theory:algo:coverme}, Lines 1-4).



To define $\pen$, we first introduce a set of helper functions that
are sometimes known as \emph{branch distance}. There are many
different forms of branch distance in the
literature~\cite{Korel:1990:AST:101747.101755,McMinn:2004:SST:1077276.1077279}.
We define ours with respect to an arithmetic condition $a~op~b$.

\begin{definition}\label{def:overview:bd}
Let $a,b\in\Real$, $\myop\in\{==,\leq,<,\neq,\geq,>\}$, $\epsilon\in\Real_{> 0}$.
 We define \emph{branch distance} $d_{\epsilon}(\myop, a,b)$ as follows:
\begin{align}
d_{\epsilon}(==, a,b)&\mydef (a-b)^2 \\
d_{\epsilon}(\leq, a,b)&\mydef (a\leq b)~?~0:(a-b)^2\\
d_{\epsilon}(<,a,b)&\mydef (a <b)~?~0:(a-b)^2+\epsilon\\
d_{\epsilon}(\neq, a,b)&\mydef (a\neq b)~?~0:\epsilon 
\end{align}
and $d_{\epsilon}(\geq,a,b)\mydef d_{\epsilon}(\leq,b,a)$,
$d_{\epsilon}(>,a,b)\mydef d_{\epsilon}(<,b,a)$. 
We use
$\epsilon$ to denote a small positive floating-point close to machine
epsilon. The idea is to treat a strict floating-point inequality $x >
y$ as a non-strict inequality $x \geq y+\epsilon$, \etc
We will drop the explicit reference to $\epsilon$ when
using the branch distance.
\end{definition}

The intention of $d(\myop,a,b)$ is to {quantify how far $a$ and $b$
  are from attaining $a~\myop~b$}. For example, $d(==,a,b)$ is
strictly positive when $a\neq b$, becomes smaller when $a$ and $b$
go closer, and vanishes when $a==b$. The following property holds:
\begin{align}
d(\myop,a,b)\geq 0\;\; \text{~and~}\;\; d(\myop,a,b)=0\Leftrightarrow a~\myop~b.
\label{eq:overview:brDist}
\end{align}

 As an analogue, we set $\pen$
to {quantify how far an input is from saturating a new branch}.
We define $\pen$ following Algo.~\ref{theory:algo:coverme}, Lines~14-23.
\begin{definition}
For  branch coverage-based testing, the function $\pen$ has four parameters, namely, the label of the
conditional statement $l_i$, $\myop$, and $a$ and $b$ from the arithmetic
condition $a~\myop~b$.
\begin{itemize}
\item [(a)]  If neither of the two branches at $l_i$ is
saturated, we let $\pen$ return $0$ because any input saturates a
new branch (Lines~16-17). 
\item [(b)] If one branch at $l_i$ is saturated but the other is
not, we set $\myr$ to be the distance to the unsaturated branch (Lines~18-21).
\item [(c)] If both branches at $l_i$ have already been saturated,
$\pen$ returns the previous value of the global variable $\myr$ (Lines~22-23).
\end{itemize}

For example, $\pen$ at $l_0$ and $l_1$ are invoked as
$pen(l_i, \leq,x,1)$ and $\pen(l_1,==,y,4)$ respectively in
Fig.~\ref{fig:algo:injecting_pen}.

\label{def:overview:pen}
\end{definition}


\SetKwProg{myproc}{Procedure}{}{} \SetKwInput{Gb}{Global}
\begin{algorithm} [t]\footnotesize 
  \DontPrintSemicolon
  \KwIn{\begin{tabularx}{.8\linewidth}[t]{>{$}l<{$} X}
      \FOO & Program under test \\
      \nStart & Number of starting points\\
      \LM & Local optimization  used in \mcmc\\
      \niter & Number of iterations for \mcmc\\
    \end{tabularx}} 

\KwOut{\begin{tabularx}{.8\linewidth}[t]{>{$}l<{$} X}
      X & Generated input set \\
    \end{tabularx}}

\BlankLine
  \tcc{Step 1}
Inject   global variable $\myr$  in $\FOO$\; 
\For{ conditional
    statement $l_i$ in {\upshape $\FOO$} }{ 
    Let the  Boolean condition at  $l_i$ be 
    $\mylhs~\myop~\myrhs$ where $\myop\in\{\leq,<,=,>,\geq,\neq\}$\;
    Insert assignment $\myr=\pen(l_i,\myop,\mylhs, \myrhs)$ before $l_i$ }
  \tcc{Step 2}
  Let $\FOOI$ be the newly instrumented program, and $\FOOR$ be:
  \lstinline!  double FOO_R(double x) {r = 1; FOO_I(x); return r;}!\;

\tcc{Step 3}
Let $\Explored=\emptyset$\;
Let $X =\emptyset$ \; 
\For{$k=1$ to $\nStart$}{
Randomly take a  starting point $x$\;
Let $x^* = \text{\mcmc}(\FOOR,x)$\;
  \lIf{\upshape $\FOOR(x^*) = 0$}{ $ X = X \cup \{x^*\}$  }
Update $\Explored$ \;
}

\Return $X$\;

  \BlankLine

\SetKwProg{Fn}{Function}{}{}

\Fn{$\pen(l_i,\myop, \mylhs,\myrhs)$}{
Let $i\myT$ and $i\myF$ be the {\tt true} and the {\tt false} branches at $l_i$\;
\If{$i\myT\not\in\Explored$ and $i\myF\not\in\Explored$ }
{\Return $0$}
\uElseIf{$i\myT\not\in\Explored$ and $i\myF\in\Explored$ }
{\Return $\dist(\myop,\mylhs,\myrhs)$ 
 \tcc{$d$: Branch distance}
}
\uElseIf{$i\myT\in\Explored$ and $i\myF\not\in\Explored$ }
{\Return $\dist(\overline{\myop}, \mylhs,\myrhs)$  \tcc{$\overline{\myop}$: the opposite of $\myop$}}
\uElse(\tcc*[h]{$i\myT\in\Explored$ and $i\myF\in\Explored$ }){\Return $\myr$ }
}

  \BlankLine
 \Fn{{\upshape \mcmc}($\energy$, $x$)}
 {

     $\xloc=\lmin(\energy, x)$\; \tcc{Local minimization}  

     \For{$k = 1$ \KwTo $\niter$}{

       Let $\delta$ be a random perturbation generation from a predefined distribution\;
       Let  $\xpro=\lmin(\energy, \xloc + \delta)$\;

       \lIf{$\energy(\xpro)<\energy(\xloc)$}{$\mathit{accept} = \mathit{true}$}

       \Else{

         Let $m$ be a random number  generated from the uniform distribution on $[0,1]$\;

         Let ${\mathit{accept}}$ be the Boolean $m<\exp(\energy(\xloc) - \energy(\xpro)) $ 
        }

       \lIf{$\mathit{accept}$}{ $\xloc = \xpro$}
     }    

     \Return $\xloc$\;
 }

  \caption{Branch coverage-based testing}
  \label{theory:algo:coverme} 
\end{algorithm}

\Paragraph{Algorithm for Step 2}
This step constructs the representing function \FOOR
(Algo.~\ref{theory:algo:coverme}, Line~5). Its input domain is the
same as that of \FOOI and \FOO, and its output domain is {\tt double},
so to simulate a real-valued mathematical function which can then be
processed by the unconstrained programming  backend.

\FOOR initializes $\myr$ to 1. This is essential for the correctness of the
algorithm because we expect \FOOR returns a non-negative value when all branches
are saturated (Sect.~\ref{sect:overview:example}, Step 2).
\FOOR then calls $\FOOI(x)$ and records the value of $\myr$ at the end
of executing $\FOOI(x)$. This $\myr$ is the returned value of \FOOR.

As mentioned in Sect.~\ref{sect:overview:example}, it is important to
ensure that \FOOR meets conditions C1 and C2.  The condition {C1}
holds true since \FOOR returns the value of the instrumented $\myr$, which
is never assigned a negative quantity. The theorem below states \FOOR
also satisfies C2.

\begin{thm}
Let \FOOR be the program constructed in Algo.~\ref{theory:algo:coverme},  and $S$ the branches that have been saturated. Then, for any input $x\in\dom{\FOO}$, $\FOOR(x)=0$ $\Leftrightarrow $  $x$ saturates a branch that does not belong to $S$. 
\label{lem:overview:c2}
\end{thm}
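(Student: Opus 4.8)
The plan is to read off the numeric value $\FOOR(x)$ directly from the execution trace of $\FOOI(x)$ and then match it against the saturation condition. Since $\FOOR$ sets $\myr=1$ and then runs $\FOOI(x)$, which reassigns $\myr=\pen(\cdot)$ immediately before every conditional it reaches, the returned value is just the value of $\myr$ after the last such assignment along the (deterministic) path of $x$. The only case in which $\pen$ leaves $\myr$ untouched is the both-branches-saturated case (Def.~\ref{def:overview:pen}(c), the final branch of $\pen$ in Algo.~\ref{theory:algo:coverme}), which returns the incoming $\myr$. Calling a conditional on $x$'s path \emph{active} when at least one of its branches is not in $S$, I would first show that $\FOOR(x)$ equals the value produced at the \emph{deepest} active conditional $l_{i_k}$ on the path (and equals the initial $1$ if no active conditional is met). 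Writing $b_k$ for the branch $x$ takes at $l_{i_k}$ and invoking the branch-distance property~\eqref{eq:overview:brDist}, this yields the pivotal reformulation: $\FOOR(x)=0$ iff an active $l_{i_k}$ exists and $x$ takes a branch $b_k\notin S$ there (in case (a) either branch qualifies; in case (b) the distance vanishes exactly when the unique unsaturated branch is taken, and is strictly positive otherwise).

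For the forward direction I would show that this $b_k\notin S$ is in fact saturated by $X\cup\{x\}$. The branch $b_k$ is covered by $x$ since it lies on $x$'s path, so it remains to cover all descendants of $b_k$. Here I would use the structural fact that, after $b_k$, control reaches a unique first conditional $l_{i_{k+1}}$ through which every descendant of $b_k$ must pass; hence the descendants of $b_k$ are exactly $(i_{k+1})\myT$, $(i_{k+1})\myF$ and their descendants (or $b_k$ is a leaf, in which case there is nothing left to cover). Because $l_{i_k}$ is the deepest active conditional, every later conditional on the path — in particular $l_{i_{k+1}}$ — is in case (c), i.e.\ has both branches already saturated by $X$. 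Saturation of $(i_{k+1})\myT$ and $(i_{k+1})\myF$ means they and all their descendants are covered by $X$, which is precisely all descendants of $b_k$. Thus $b_k$ and all its descendants are covered by $X\cup\{x\}$, so $b_k$ is a newly saturated branch outside $S$.

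For the backward direction I would argue the contrapositive: if $\FOOR(x)\neq 0$, then no branch outside $S$ becomes saturated. If no active conditional is met, every branch $x$ covers is in $S$; since saturating a branch $b$ forces $x$ to newly cover something in $b$'s subtree (hence to pass through $b$), no branch outside $S$ can be saturated. Otherwise $x$ takes at the deepest active $l_{i_k}$ the \emph{saturated} branch $b_k\in S$, while its untaken sibling $b_k'$ satisfies $b_k'\notin S$. Any newly saturated branch must lie on $x$'s path; the branches taken after $l_{i_k}$ are all in $S$ (they sit in the already-saturated subtree of $b_k$), and $b_k\in S$, so the only candidates are the strictly shallower $b_j$ ($j<k$) with $b_j\notin S$. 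The key observation is that $b_k'$ — an unsaturated sibling lying in the subtree of every such $b_j$ — is a persistent witness of non-saturation: since $b_k'\notin S$, some branch $u\in\{b_k'\}\cup\mathrm{desc}(b_k')$ is not covered by $X$, and because $x$ follows $b_k$ rather than $b_k'$ it never enters $b_k'$'s subtree, so $u$ stays uncovered under $X\cup\{x\}$; as $u$ is a descendant of $b_j$, the branch $b_j$ is not saturated by $X\cup\{x\}$.

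I expect the backward direction to be the crux, since it is where the choice to store the \emph{previous} $\myr$ in the both-saturated case (rather than, say, $0$) really pays off: it forces $\FOOR(x)$ to be dictated by the deepest still-relevant conditional, and the untaken-sibling witness argument is what rules out the spurious possibility that $x$ saturates some ancestor branch without the distance at $l_{i_k}$ vanishing. A secondary, more mundane obstacle is the structured-control-flow assumption used in the forward direction — that the descendants of a branch all funnel through a single immediately-following conditional — which I would state explicitly as the CFG model underlying Def.~\ref{def:overview:saturate}.
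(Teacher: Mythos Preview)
Your argument is correct and mirrors the paper's: both isolate the last conditional on the path with an unsaturated branch (your ``deepest active'' $l_{i_k}$, the paper's $l_m$ with all later $\pen_i$ in case~(c)) and read off $\FOOR(x)$ as the $\pen$-value assigned there, then invoke Eq.~\eqref{eq:overview:brDist}. The only notable difference is that the paper proves the $\Leftarrow$ direction directly---showing that a newly saturated branch at $l_m$ forces $\pen_m=0$ and all subsequent $\pen_i$ to preserve $\myr$---rather than via your contrapositive untaken-sibling witness, and it leaves implicit the structured-CFG assumption you rightly flag.
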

\begin{proof}
  We first prove the $\Rightarrow$ direction.  Take an arbitrary $x$
  such that $\FOOR(x)=0$. Let $\tau=[l_0,\ldots l_n]$ be the path in
  $\FOO$ passed through by executing $\FOO(x)$. We know, from Lines
  2-4 of the algorithm, that each $l_i$ is preceded by an invocation
  of $\pen$ in $\FOOR$. We write $\pen_i$ for the one injected before
  $l_i$ and divide $\{\pen_i\mid i\in[1,n]\}$ into three groups. For
  the given input $x$, we let {\it P1}, {\it P2} and {\it P3} denote the groups of $\pen_i$ that
  are defined in Def.~\ref{def:overview:pen}(a),
  (b) and (c),
  respectively.  Then, we can always have a prefix path of
  $\tau=[l_0,\ldots l_m]$, with $0\leq m\leq n$ such that each
  $\pen_i$ for $i\in[m+1,n]$ belongs to {\it P3}, and each $\pen_i$ for
  $i\in [0,m]$ belongs to either {\it P1} or {\it P2}.  Here, we can guarantee the
  existence of such an $m$ because, otherwise, all $\pen_i$ belong in
  {\it P3}, and $\FOOR$ becomes $\lambda x.1$. The latter contradicts the
  assumption that $\FOOR(x)=0$.
  Because each $\pen_i$ for $i>m$ does nothing but performs
  $\myr=\myr$, we know that $\FOOR(x)$ equals to the exact value of
  $\myr$ that $\pen_m$ assigns. Now consider two disjunctive cases on
  $\pen_m$. If $\pen_m$ is in {\it P1}, we immediately conclude that
  $x$ saturates a  new branch. Otherwise, if $\pen_m$ is in {\it P2}, we
  obtains the same from Eq.~\eqref{eq:overview:brDist}.
Thus,  we have established the $\Rightarrow$
  direction of the theorem.

  To prove the $\Leftarrow$ direction, we use the same
  notation as above, and let $x$ be the input that saturates a new branch,
  and $[l_0,\ldots, l_n]$ be the exercised path. Assume that $l_m$
  where $0\leq m\leq n$ corresponds to the newly saturated branch. We
  know from the algorithm  that (1) $\pen_m$ updates
  $\myr$ to $0$, and (2) each $\pen_i$ such that $i>m$ maintains the
  value of $\myr$ because their descendant branches have been saturated.
We have thus proven the $\Leftarrow$ direction of the theorem.
\end{proof}



\Paragraph{Algorithm for Step 3}
The main loop (Algo.~\ref{theory:algo:coverme}, Lines 8-12) relies on
an existing MCMC engine. It takes an objective function and a starting
point and outputs $x^*$ that it regards as a minimum point.  Each
iteration of the loop launches MCMC from a randomly selected starting
point (Line 9). From each starting point, MCMC computes the minimum
point $x^*$ (Line 10). If $\FOOR(x^*)=0$, $x^*$ is to be added to  $X$ (Line 11).  Thm.~\ref{lem:overview:c2}
ensures that $x^*$ saturates a new branch if
$\FOOR(x^*)=0$. Therefore, in theory, we only need to set
$\nStart=2*N$ where $N$ denotes the number of conditional statements,
so to saturate all $2*N$ branches.  In practice, however, we set
$\nStart>2*N$ because MCMC cannot guarantee that its output is a true
global minimum point.


The MCMC procedure (Algo.~\ref{theory:algo:coverme}, Lines 24-34) is
also known as the Basinhopping
  algorithm~\cite{leitner1997global}. It is an MCMC sampling over the
space of the local minimum points~\cite{Li01101987}.  The random
starting point $x$ is first updated to a local minimum point $\xloc$
(Line 25). Each iteration (Lines 26-33) is composed of the two phases
that are classic in the {Metropolis-Hastings} algorithm family of
MCMC~\cite{citeulike:831786}. In the first phase (Lines 27-28), the
algorithm \emph{proposes} a sample $\xpro$ from the current sample
$x$. The sample $\xpro$ is obtained with a perturbation $\delta$
followed by a local minimization, \ie, $\xpro=\LM(f,\xloc+\delta)$
(Line 28), where \LM denotes a local minimization in Basinhopping, and
$f$ is the objective function.  The second phase (Lines 29-33) decides
whether the proposed $\xpro$ should be accepted as the next sampling
point. If $\energy(\xpro)<\energy(\xloc)$, the proposed $\xpro$ will
be sampled; otherwise, $\xpro$ may still be sampled, but only with the
probability of $\exp((\energy(\xloc)-\energy(\xpro))/T)$, in which $T$
(called the annealing temperature~\cite{Kirkpatrick83optimizationby})
is set to $1$ in Algo.~\ref{theory:algo:coverme} for simplicity.

\section{Implementation}
\label{sect:implem}










\begin{figure}
\centering
\includegraphics[width=1.0\linewidth]{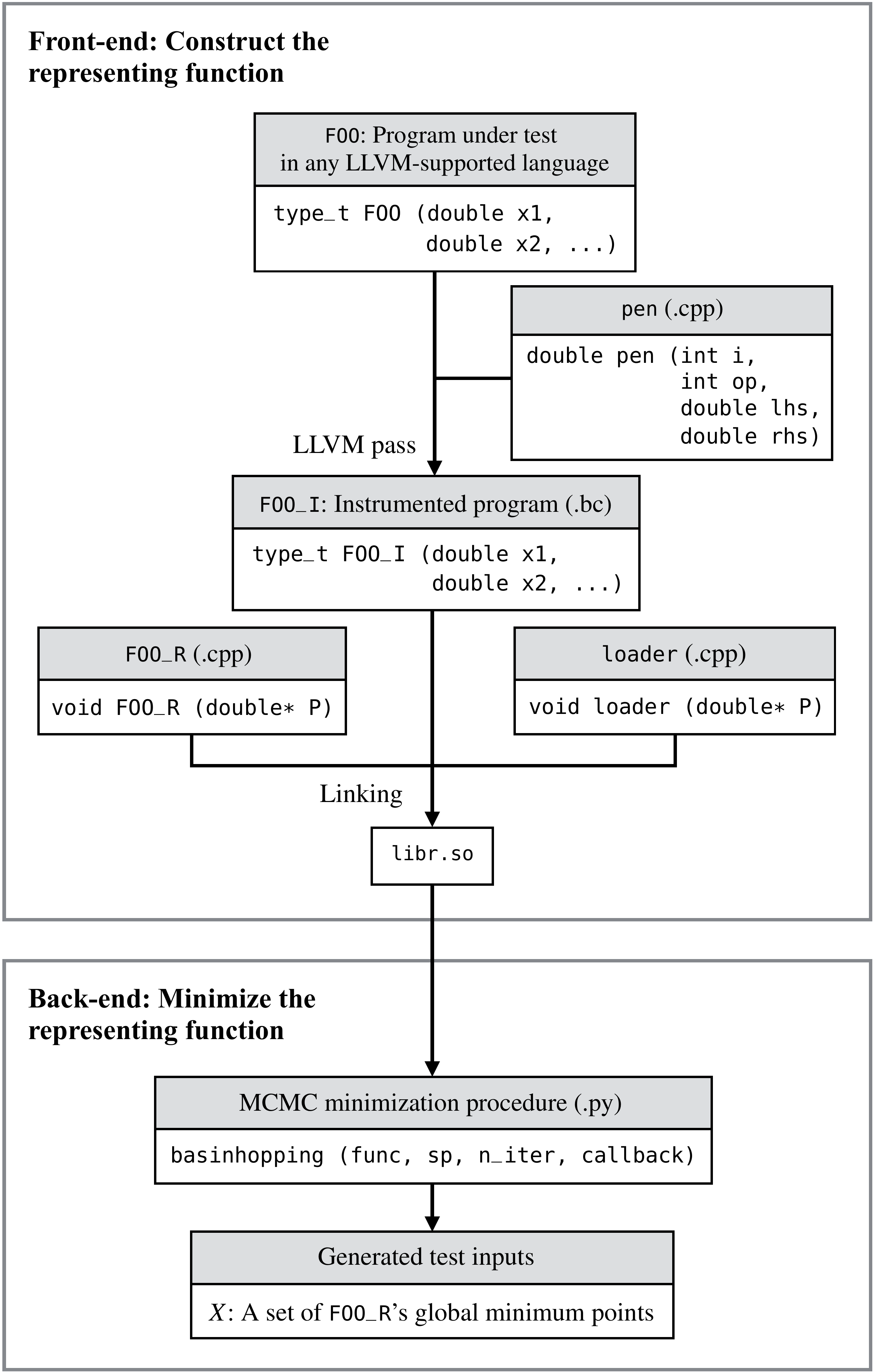}
\caption{\coverme implementation.}
\label{fig:implem:overview}
\end{figure}

As a proof-of-concept demonstration, we have implemented
Algo.~\ref{theory:algo:coverme} into the tool \coverme.  This section
presents its implementation and technical details.





\subsection{Frontend of CoverMe}
The frontend implements Steps 1 and 2 of Algo.~\ref{theory:algo:coverme}. 
\coverme compiles the program under test \FOO to LLVM IR with Clang~\cite{clang:web}. Then it uses an LLVM pass~\cite{llvmpass:web} to inject assignments. The program under test can be in any LLVM-supported language, \eg, Ada, the C/C++ language family, or Julia, \etc Our implementation has been tested on  C code. 


Fig.~\ref{fig:implem:overview} illustrates \FOO as a function of signature \texttt{type\_t
  FOO (type\_t1 x1, type\_t2 x2, \ldots)}.  The return type (output) of the function,
\texttt{type\_t}, can be any kind of types supported by C, whereas the types of the input
parameters, {\texttt{type\_t1, type\_t2, \ldots}}, are restricted to
{\tt double} or {\tt
  double*}. 
We have explained the signature of $\pen$ in Def.~\ref{def:overview:pen}.
Note that \coverme does not inject $\pen$ itself into \FOO, but instead injects
assignments that invoke $\pen$.  We implement $\pen$ in a separate C++
file.



The frontend also links \FOOI and \FOOR with  a simple program  \loader into  a shared object file \librso, which is the 
outcome of the frontend. It stores the
representing function in the form of a shared object file (.so file). 

 
\subsection{Backend of CoverMe}

The backend implements Step 3 of Algo.~\ref{theory:algo:coverme}. 
It invokes the representing function via \librso.  
The kernel of the backend is an external MCMC engine.
It uses the off-the-shelf implementation known as \emph{Basinhopping} from the Scipy Optimization package~\cite{scipy:web}. Basinhopping takes a range of input parameters. Fig.~\ref{fig:implem:overview} shows the important ones for our implementation {\tt
  basinhopping(f, sp, n\_iter, call\_back)}, where {\tt f} refers to the
representing function  from \librso, {\tt sp} is a starting point as a 
Python {\tt Numpy} array, {\tt n\_iter} is the iteration number used in Algo.~\ref{theory:algo:coverme} and {\tt call\_back} is a
client-defined procedure. Basinhopping invokes {\tt call\_back}  at the end of each
iteration (Algo.~\ref{theory:algo:coverme}, Lines~24-34). The call\_back procedure
allows \coverme to terminate if it saturates all branches. In this
way, \coverme does not need to wait until passing all $\nStart$
iterations (Algo.~\ref{theory:algo:coverme}, Lines~8-12).



\subsection{Technical Details}
Sect.~\ref{sect:overview} assumes Def.~\ref{def:overview:branchcoverage}(a-c) for the sake of simplification.   This section discusses how   \coverme relaxes the assumptions when  handling real-world floating-point code.  We also show how \coverme handles function calls at the end of this section. 

\Paragraph{Handling Pointers  (Relaxing Def.~\ref{def:overview:branchcoverage}(a))} 
We consider only pointers to floating-point numbers. They may occur  (1) in an input parameter, (2) in a conditional statement, or  (3) in the code body but not in the conditional statement. 

\coverme inherently handles case (3) because it is execution-based and does not need to analyze pointers and their effects.  \coverme currently does not handle case (2) and   ignores these conditional statements by not injecting $\pen$ before them.  

Below we explain how \coverme deals with case (1). A branch
coverage testing problem for a program whose inputs are pointers to
doubles, can be regarded as the same problem with a simplified program
under test. For instance, finding test inputs to cover branches of
program {\tt void FOO(double* p) \{if (*p <= 1)... \} } can be reduced
to testing the program {\tt void FOO\_with\_no\_pointer (double x)
  \{if (x <= 1)... \}}. \coverme transforms program \FOO to {\tt FOO\_with\_no\_pointer}
if a   \FOO's input parameter is a floating-point pointer.   




\Paragraph{Handling  Comparison between Non-floating-point Expressions (Relaxing Def.~\ref{def:overview:branchcoverage}(b))} 

We have encountered situations where a  conditional statement invokes a comparison between non floating-point numbers.  \coverme handles these situations by first promoting the non floating-point numbers to floating-point numbers and then injecting $\pen$ as described in Algo.~\ref{theory:algo:coverme}.  For example, before a conditional statement like {\tt if (xi op yi)} where {\tt xi} and {\tt yi} are integers, \coverme injects  {\tt r = pen (i, op, (double) x, (double) y));}. 
Note that such an approach does not allow us to handle  data types that are incompatible with floating-point types, \eg, conditions like {\tt if (p != Null)}, which \coverme has to ignore.

\Paragraph{Handling Infeasible Branches (Relaxing Def.~\ref{def:overview:branchcoverage}(c))}
Infeasible branches are branches that cannot be covered by any input.
Attempts to cover infeasible branches are useless and time-consuming. 

Detecting infeasible branches is a difficult problem in general.
\coverme uses a heuristic to detect infeasible branches.  When CoverMe
finds a minimum that is not zero, it deems the unvisited branch of the
last conditional to be infeasible and adds it to $\Explored$, the set
of unvisited and deemed-to-be infeasible branches.

 Imagine that we modify $l_1$ of the program \FOO in
Fig.~\ref{fig:algo:injecting_pen} to  the conditional statement {\tt
  if (y == -1)}. Then the branch $1_T$ becomes infeasible. We rewrite this
modified program below and  illustrate how we deal with infeasible branches. 
\begin{lstlisting}
l0: if (x <= 1) {x++}; 
    y = square(x); 
l1: if (y == -1) {...}
\end{lstlisting}
where we omit the concrete implementation of {\tt square}. 

Let \FOOR denote the representing function constructed for the
program. In the minimization process, whenever \coverme obtains $x^*$
such that $\FOOR(x^*) > 0$, \coverme selects a branch that it regards
infeasible. \coverme selects the branch as follows: Suppose $x^*$
exercises a path $\tau$ whose last conditional statement is denoted by
$l_z$, and, without loss of generality, suppose $z_T$ is passed
through by $\tau$, then \coverme regards $z_F$ as an infeasible
branch.

In the modified program above, if $1_F$ has been saturated, the
representing function evaluates to $(y+1)^2$ or $(y+1)^2+1$, where $y$
equals to the non-negative {\tt square(x)}. Thus, the minimum point
$x^*$ must satisfy $\FOOR(x^*)>0$ and its triggered path ends with
branch $1_F$.  \coverme then regards $1_T$ as an infeasible branch.

\coverme then regards the infeasible branches as already saturated. It
means, in line 12 of Algo.~\ref{theory:algo:coverme}, \coverme updates
$\Explored$ with saturated branches and infeasible branches (more
precisely, branches that \coverme regards infeasible).

The presented heuristic works well in practice (See
Sect.~\ref{sect:eval}), but we do not claim that our heuristic always
correctly detects infeasible branches.


\Paragraph{Handling Function Calls}
By default, \coverme injects $\myr = \pen_i$ only in the entry
function to test. If the entry function invokes other external
functions, they will not be transformed. For example, in the program
\FOO of Fig.~\ref{fig:algo:injecting_pen}, we do not transform {\tt
  square(x)}. In this way, \coverme only attempts to saturate all
branches for a single function at a time.

However, \coverme can also easily handle functions invoked by its entry
function. As a simple example, consider:
\begin{lstlisting}
void FOO (double x) { GOO(x); }
void GOO (double x) { if (sin(x) <= 0.99) ... }
\end{lstlisting}

If \coverme aims to saturate {\tt FOO} and {\tt GOO} but not {\tt sin},
and  it sets \FOO as the entry function, then it instruments
both {\tt FOO} and {\tt GOO}. Only {\tt GOO} has a
conditional statement, and \coverme injects an assignment on {\tt
  r} in {\tt GOO}.

\section{Evaluation}
\label{sect:eval}
This section presents our extensive evaluation of CoverMe.  All
experiments are performed on a laptop with a 2.6 GHz Intel Core i7
running a Ubuntu 14.04 virtual machine with 4GB RAM.  The main results are presented in Tab.~\ref{tab:eval:me_vs_random}, ~\ref{tab:eval:me_vs_austin} and Fig.~\ref{fig:histograph:random}.

\begin{figure*}[htp]
\centering
\includegraphics[width=1.0\linewidth]{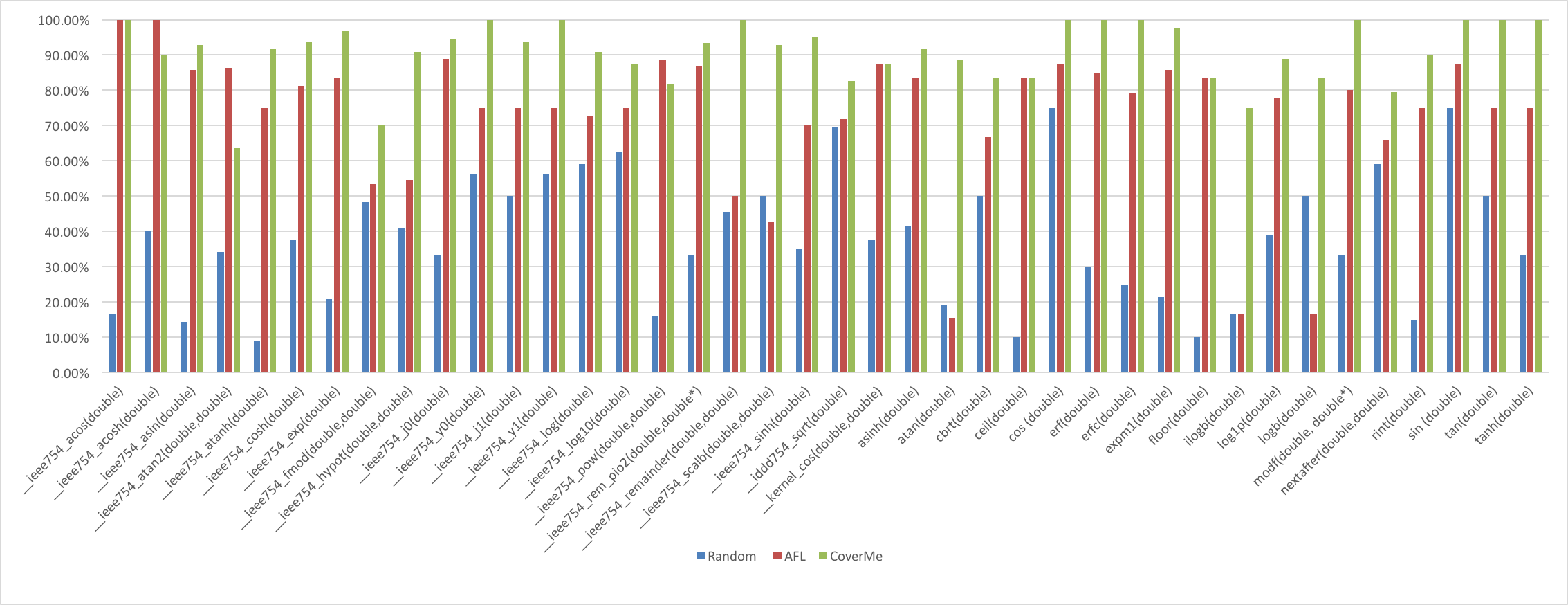}
\caption{Benchmark results corresponding to the data given in
  Tab.Tab.~\ref{tab:eval:me_vs_random}. The y-axis refers to the branch
  coverage; the x-axis refers to the benchmarks.}
\label{fig:histograph:random}
\end{figure*}

\subsection{Experimental Setup}
\label{subsect:expsetup}

\Paragraph{Benchmarks}

Our benchmarks are C floating-point programs from the Freely
Distributable Math Library (Fdlibm) 5.3, developed by Sun
Microsystems, Inc.  These programs are available from the network
library netlib. We choose Fdlibm because it represents a set of
floating-point programs with reference quality and a large user group.
For example, the Java SE 8's math library is defined with respect to
Fdlibm 5.3.~\cite{javastrictmath:web}, Fdlibm is also ported to
Matlab, JavaScript, and has been integrated in Android.

Fdlibm 5.3 has 80 programs. Each program defines one or multiple math
functions.  In total, Fdlibm 5.3 contains 92 math functions.  Among
them, we exclude 36 functions that have no branch, 11 functions
involving input parameters that are not floating-point, and 5 static C
functions. Our benchmarks include \emph{all} remaining 40 functions.
Sect.~\ref{sect:untested} lists all excluded functions in Fdlibm 5.3.









\Paragraph{Parameter Settings of CoverMe}
CoverMe supports three parameters: (1) the number of
Monte-Carlo iterations $\niter$, (2) the local optimization algorithm
\LM, and (3) the number of starting points $\nStart$.   They  correspond to  the three input parameters of 
Algo.~\ref{theory:algo:coverme}.  Our experiment sets
$\niter=5$, $\nStart=500$, and $\LM$ = ``powell'' (which refers  to
Powell’s  algorithm~\cite{Press:2007:NRE:1403886}).

\Paragraph{Tools for Comparison}
We have compared CoverMe with three tools that support floating-point
coverage-based testing: 

\begin{itemize}
\item 
Rand is a pure random testing tool.  We  have implemented Rand using a pseudo-random number generator.

\item AFL~\cite{afl:web} is a gray-box testing tool released by
  the Google security team.  
It integrates a variety of guided search
 strategies and employs genetic algorithms to efficiently increase
 code coverage.

\item Austin~\cite{lakhotia2013austin} is a coverage-based testing
  tool that implements symbolic execution and search-based
  heuristics. Austin has been shown~\cite{lakhotia2010empirical} to be
  more effective than the testing tool called
  CUTE~\cite{Sen:2005:CCU:1081706.1081750} (which is not publicly available).

 \end{itemize}

 We have decided to not consider the following tools:
\begin{itemize}
\item Klee~\cite{Cadar:2008:KUA:1855741.1855756} is the
  state-of-the-art implementation of symbolic execution.  We do not
  consider Klee because its expression language does not support
  floating-point constraints. In addition, many operations in our
  benchmark programs, such as pointer reference, dereference, type
  casting, are not supported by Klee's backend solver
  STP~\cite{ganesh2007decision}, or any other backend solvers
  compatible with the Klee Multi-solver
  extension~\cite{palikareva2013multi}.\footnote{ In the recent
    Klee-dev mailing list, the Klee developers stated that Klee or
    its variant Klee-FP only has basic floating-point support and they
    are still "working on full FP support for Klee"
    \cite{kleedev:web}.}

\item Klee-FP~\cite{collingbourne2011symbolic} is a variant of Klee
  geared toward reasoning about floating-point value equivalence. It
  determines equivalence by checking whether two floating-point
  values are produced by the same operations~\cite{collingbourne2011symbolic}. We do not consider Klee-FP because its
  special-purpose design does not support coverage-based testing.

\item Pex~\cite{Tillmann:2008:PWB:1792786.1792798} is a coverage tool based on dynamic symbolic execution. We 
  do not consider Pex because it can only run for .NET programs on Windows
  whereas our tested programs are in C, and our testing platform is
  Linux.
\item FloPSy~\cite{Lakhotia:2010:FSF:1928028.1928039} is a
  floating-point testing tool that combines search based testing and
  symbolic execution.  We do not consider  this tool because it is developed by
  the same author of Austin and before Austin is released, and the
  tool is not available to us.

\end{itemize}



\Paragraph{Coverage Measurement}
Our evaluation focuses on branch coverage. 
Sect.~\ref{sect:linecoverage} also shows our line coverage results.
For CoverMe and Rand, we use the Gnu coverage tool
Gcov~\cite{gcov:web}.  For AFL, we use AFL-cov~\cite{aflcov:web}, a
Gcov-based coverage analysis tool for AFL. For Austin, we calculate
the branch coverage by the number of covered branches provided in the
{\tt .csv} file produced by Austin when it terminates.


\Paragraph{Time Measurement}
To compare the running time of CoverMe with the other tools requires
careful design.  CoverMe, Rand, and AFL all have the potentials to
achieve higher coverage if given more time or iteration cycles.
CoverMe terminates when it exhausts its iterations or achieves full
coverage, whereas Random testing and AFL do not terminate by
themselves.  In our experiment,  we first run CoverMe until
it terminates using the parameters provided above.  Then, we run Rand
and AFL with ten times of the CoverMe time.

Austin terminates when it decides that no more coverage can be
attained.  It reports no coverage results until it terminates its
calculation. Thus, it is not reasonable to set the same amount of time
for Austin as AFL and Rand.  The time for Austin refers to the time
Austin spends when it stops running.




\subsection{Quantitative Results}



\subsubsection{CoverMe versus Random testing}

\setlength{\tabcolsep}{1pt}
\begin{table*}[p]\centering \footnotesize             
  \caption {CoverMe versus  Rand and AFL.  The benchmark programs are taken from Fdlibm~\cite{fdlibm:web}. The coverage percentage is reported by  Gcov~\cite{gcov:web}. The time for CoverMe refers to the wall time. The time for Rand and AFL are set to be ten times of the CoverMe time. 
  } \label{tab:eval:me_vs_random}
  \begin{tabular}{llr|rrr|rrr|rr}
\toprule
    \multicolumn{3}{c|}{Benchmark}&\multicolumn{3}{c|}{Time (s)}&\multicolumn{3}{c|}{Branch  (\%)} & \multicolumn{2}{c|}{Improvement (\%)}  
		\\\arrayrulecolor{lightgray}\hline

File & Function & \#Branches & Rand & AFL & CoverMe & Rand  & AFL & CoverMe & CoverMe vs. Rand & CoverMe vs. AFL
		\\\arrayrulecolor{black}\midrule
e\_acos.c & ieee754\_acos(double) & 12 & 78 & 78 & 7.8 & 16.7 & 100.0 & 100.0 & 83.3 & 0.0  \\\arrayrulecolor{black}\midrule
e\_acosh.c & ieee754\_acosh(double) & 10 & 23 & 23 & 2.3 & 40.0 & 100.0 & 90.0 & 50.0 & -10.0  \\\arrayrulecolor{black}\midrule
e\_asin.c & ieee754\_asin(double) & 14 & 80 & 80 & 8.0 & 14.3 & 85.7 & 92.9 & 78.6 & 7.1  \\\arrayrulecolor{black}\midrule
e\_atan2.c & ieee754\_atan2(double,double) & 44 & 174 & 174 & 17.4 & 34.1 & 86.4 & 63.6 & 29.6 & -22.7  \\\arrayrulecolor{black}\midrule
e\_atanh.c & ieee754\_atanh(double) & 12 & 81 & 81 & 8.1 & 8.8 & 75.0 & 91.7 & 82.8 & 16.7  \\\arrayrulecolor{black}\midrule
e\_cosh.c & ieee754\_cosh(double) & 16 & 82 & 82 & 8.2 & 37.5 & 81.3 & 93.8 & 56.3 & 12.5  \\\arrayrulecolor{black}\midrule
e\_exp.c & ieee754\_exp(double) & 24 & 84 & 84 & 8.4 & 20.8 & 83.3 & 96.7 & 75.8 & 13.3  \\\arrayrulecolor{black}\midrule
e\_fmod.c & ieee754\_fmod(double,double) & 60 & 221 & 221 & 22.1 & 48.3 & 53.3 & 70.0 & 21.7 & 16.7  \\\arrayrulecolor{black}\midrule
e\_hypot.c & ieee754\_hypot(double,double) & 22 & 156 & 156 & 15.6 & 40.9 & 54.5 & 90.9 & 50.0 & 36.4  \\\arrayrulecolor{black}\midrule
e\_j0.c & ieee754\_j0(double) & 18 & 90 & 90 & 9.0 & 33.3 & 88.9 & 94.4 & 61.1 & 5.6  \\\arrayrulecolor{black}\midrule
 & ieee754\_y0(double) & 16 & 7 & 7 & 0.7 & 56.3 & 75.0 & 100.0 & 43.8 & 25.0  \\\arrayrulecolor{black}\midrule
e\_j1.c & ieee754\_j1(double) & 16 & 102 & 102 & 10.2 & 50.0 & 75.0 & 93.8 & 43.8 & 18.8  \\\arrayrulecolor{black}\midrule
 & ieee754\_y1(double) & 16 & 7 & 7 & 0.7 & 56.3 & 75.0 & 100.0 & 43.8 & 25.0  \\\arrayrulecolor{black}\midrule
e\_log.c & ieee754\_log(double) & 22 & 34 & 34 & 3.4 & 59.1 & 72.7 & 90.9 & 31.8 & 18.2  \\\arrayrulecolor{black}\midrule
e\_log10.c & ieee754\_log10(double) & 8 & 11 & 11 & 1.1 & 62.5 & 75.0 & 87.5 & 25.0 & 12.5  \\\arrayrulecolor{black}\midrule
e\_pow.c & ieee754\_pow(double,double) & 114 & 188 & 188 & 18.8 & 15.8 & 88.6 & 81.6 & 65.8 & -7.0  \\\arrayrulecolor{black}\midrule
e\_rem\_pio2.c & ieee754\_rem\_pio2(double,double*) & 30 & 11 & 11 & 1.1 & 33.3 & 86.7 & 93.3 & 60.0 & 6.7  \\\arrayrulecolor{black}\midrule
e\_remainder.c & ieee754\_remainder(double,double) & 22 & 22 & 22 & 2.2 & 45.5 & 50.0 & 100.0 & 54.6 & 50.0  \\\arrayrulecolor{black}\midrule
e\_scalb.c & ieee754\_scalb(double,double) & 14 & 85 & 85 & 8.5 & 50.0 & 42.9 & 92.9 & 42.9 & 50.0  \\\arrayrulecolor{black}\midrule
e\_sinh.c & ieee754\_sinh(double) & 20 & 6 & 6 & 0.6 & 35.0 & 70.0 & 95.0 & 60.0 & 25.0  \\\arrayrulecolor{black}\midrule
e\_sqrt.c & iddd754\_sqrt(double) & 46 & 156 & 156 & 15.6 & 69.6 & 71.7 & 82.6 & 13.0 & 10.9  \\\arrayrulecolor{black}\midrule
k\_cos.c & kernel\_cos(double,double) & 8 & 154 & 154 & 15.4 & 37.5 & 87.5 & 87.5 & 50.0 & 0.0  \\\arrayrulecolor{black}\midrule
s\_asinh.c & asinh(double) & 12 & 84 & 84 & 8.4 & 41.7 & 83.3 & 91.7 & 50.0 & 8.3  \\\arrayrulecolor{black}\midrule
s\_atan.c & atan(double) & 26 & 85 & 85 & 8.5 & 19.2 & 15.4 & 88.5 & 69.2 & 73.1  \\\arrayrulecolor{black}\midrule
s\_cbrt.c & cbrt(double) & 6 & 4 & 4 & 0.4 & 50.0 & 66.7 & 83.3 & 33.3 & 16.7  \\\arrayrulecolor{black}\midrule
s\_ceil.c & ceil(double) & 30 & 88 & 88 & 8.8 & 10.0 & 83.3 & 83.3 & 73.3 & 0.0  \\\arrayrulecolor{black}\midrule
s\_cos.c & cos (double) & 8 & 4 & 4 & 0.4 & 75.0 & 87.5 & 100.0 & 25.0 & 12.5  \\\arrayrulecolor{black}\midrule
s\_erf.c & erf(double) & 20 & 90 & 90 & 9.0 & 30.0 & 85.0 & 100.0 & 70.0 & 15.0  \\\arrayrulecolor{black}\midrule
 & erfc(double) & 24 & 1 & 1 & 0.1 & 25.0 & 79.2 & 100.0 & 75.0 & 20.8  \\\arrayrulecolor{black}\midrule
s\_expm1.c & expm1(double) & 42 & 11 & 11 & 1.1 & 21.4 & 85.7 & 97.6 & 76.2 & 11.9  \\\arrayrulecolor{black}\midrule
s\_floor.c & floor(double) & 30 & 101 & 101 & 10.1 & 10.0 & 83.3 & 83.3 & 73.3 & 0.0  \\\arrayrulecolor{black}\midrule
s\_ilogb.c & ilogb(double) & 12 & 83 & 83 & 8.3 & 16.7 & 16.7 & 75.0 & 58.3 & 58.3  \\\arrayrulecolor{black}\midrule
s\_log1p.c & log1p(double) & 36 & 99 & 99 & 9.9 & 38.9 & 77.8 & 88.9 & 50.0 & 11.1  \\\arrayrulecolor{black}\midrule
s\_logb.c & logb(double) & 6 & 3 & 3 & 0.3 & 50.0 & 16.7 & 83.3 & 33.3 & 66.7  \\\arrayrulecolor{black}\midrule
s\_modf.c & modf(double, double*) & 10 & 35 & 35 & 3.5 & 33.3 & 80.0 & 100.0 & 66.7 & 20.0  \\\arrayrulecolor{black}\midrule
s\_nextafter.c & nextafter(double,double) & 44 & 175 & 175 & 17.5 & 59.1 & 65.9 & 79.6 & 20.5 & 13.6  \\\arrayrulecolor{black}\midrule
s\_rint.c & rint(double) & 20 & 30 & 30 & 3.0 & 15.0 & 75.0 & 90.0 & 75.0 & 15.0  \\\arrayrulecolor{black}\midrule
s\_sin.c & sin (double) & 8 & 3 & 3 & 0.3 & 75.0 & 87.5 & 100.0 & 25.0 & 12.5  \\\arrayrulecolor{black}\midrule
s\_tan.c & tan(double) & 4 & 3 & 3 & 0.3 & 50.0 & 75.0 & 100.0 & 50.0 & 25.0  \\\arrayrulecolor{black}\midrule
s\_tanh.c & tanh(double) & 12 & 7 & 7 & 0.7 & 33.3 & 75.0 & 100.0 & 66.7 & 25.0 
\\\arrayrulecolor{black}\bottomrule
MEAN &  & 23 & 69 & 69 & 6.9 & 38.0 & 72.9 & 90.8 & 52.9 & 17.9 \\\arrayrulecolor{black}\bottomrule
\end{tabular}
\end{table*}

As a sanity check, we first  compare CoverMe with Rand.   In
Tab.~\ref{tab:eval:me_vs_random}, we sort all  programs
(Col.~1) and functions (Col.~2) by their names and give the numbers
of branches  (Col.~3).

Tab.~\ref{tab:eval:me_vs_random}, Col.~6 gives the time spent by
CoverMe. The time refers to the wall time reported by the Linux
command {\tt time}.  Observe that the time varies considerably,
ranging from 0.1 second ({\tt s\_erf.c}, {\tt erfc}) to 22.1
seconds ({\tt e\_fmod.c}). Besides, the time is not necessarily
correlated with the number of branches.  For example, CoverMe takes
1.1 seconds to run {\tt s\_expm1.c} (42 branches) and 10.1 seconds to
run {\tt s\_floor.c} (30 branches).  It shows the potential for
real-world program testing since CoverMe may not be very sensitive to
the number of lines or branches.

 Tab.~\ref{tab:eval:me_vs_random}, Col.~4
gives the time spent by Rand. Since Rand does not terminate by itself,
the time refers to the timeout bound. As mentioned above, we set the
bound as ten times of the CoverMe time.



Tab.~\ref{tab:eval:me_vs_random}, Col. 7 and 9 show the branch
coverage results of Rand and CoverMe respectively.  The coverage is
reported by the Gnu coverage tool Gcov~\cite{gcov:web}.  CoverMe
achieves 100\% coverage for 11 out of 40 tested functions with an
average of 90.8\% coverage, while Rand does not achieve any 100\%
coverage and attains only 38.0\% coverage on average.  The last row of
the table shows the mean values.  Observe that all values in Col.~9
are larger than the corresponding values in Col.~7. It means that
CoverMe achieves higher branch coverage than Rand for every benchmark
program. The result validates our sanity check.

Col.~10 is the improvement of CoverMe versus Rand. We calculate the
coverage improvement as the difference between their percentages.
CoverMe provides 52.9\% coverage improvement on average.

\begin{remark}
  Tab.~\ref{tab:eval:me_vs_random} shows that CoverMe achieves partial
  coverage for some tested programs. The incompleteness occurs in two
  situations: (1) The program under test has unreachable branches; (2)
  The representing function fails to confirm $\FOOR = 0$ when it in
  fact holds (Thm.~\ref{lem:overview:c2}). The latter can be caused by
  a weak optimization backend, which produces sub-optimal minimum
  points, or by floating-point inaccuracy when evaluating \FOOR.
  Sect.~\ref{sect:incompleteness}  illustrates these two situations.

\end{remark}

\subsubsection{CoverMe versus AFL}
Tab.~\ref{tab:eval:me_vs_random} also gives the experimental results
of AFL.  Col.~5 corresponds to the ``run time'' statistics provided by
the frontend of AFL. Col.~8 shows the branch coverage of AFL.  As
mentioned in Sect.~\ref{subsect:expsetup}, we terminate AFL once it
spends ten times of the CoverMe time.  Sect.~\ref{sect:aflsettings} 
gives additional details on our AFL settings.

Our results show that AFL achieves 100\% coverage for 2 out of 40
tested functions with an average of 72.9\% coverage, while CoverMe
achieves 100\% coverage for 11 out of 40 tested functions with an
average of 90.8\% coverage. The average improvement is 17.9\% (shown
in the last row).

The last column is the improvement of CoverMe versus AFL. We calculate
the coverage improvement as the difference between their percentages.
Observe that CoverMe outperforms AFL for 33 out of 40 tested
functions.  The largest improvement is 73.1\% with program {\tt
  s\_atan.c}.  For five of tested functions, CoverMe achieves the same
coverage as AFL. There are three tested functions where CoverMe
achieves less ({\tt e\_acosh.c}, {\tt e\_atan2.c}, and {\tt e\_pow.c}). 

\begin{remark}
  
  We have further studied CoverMe's coverage on these three programs
  versus that of AFL.  We have run AFL with the same amount of time as
  CoverMe (rather than ten times as much as CoverMe). With this
  setting, AFL does not achieves 70.0\% for {\tt e\_acosh.c}, 63.6\%
  for {\tt e\_atan2.c}, and 54.4\% for {\tt e\_pow.c}, which are less
  than or equal to CoverMe's coverage.



  That being said, comparing CoverMe and AFL by running them using the
  same amount of time may be unfair because AFL usually requires much
  time to obtain good code coverage --- the reason why we have set AFL
  to run ten times as much time as CoverMe for the results reported in
  Tab.~\ref{tab:eval:me_vs_random}. 
\end{remark}

\subsubsection{CoverMe versus Austin}

\setlength{\tabcolsep}{3.5pt}
\begin{table*} \centering\footnotesize
	\caption{CoverMe versus Austin.  The benchmark programs are taken from the Fdlibm library~\cite{fdlibm:web}. The ``timeout''  refers to a time of more than 30000 seconds. The ``crash'' refers to a fatal exception when running Austin. }\label{tab:eval:me_vs_austin}
	\begin{tabular}{ll|rr|rr|rr}
		\toprule
		\multicolumn{2}{c|}{Benchmark}&\multicolumn{2}{c|}{Time (second)}&\multicolumn{2}{c|}{Branch coverage(\%)} &\multicolumn{2}{c}{Improvement}
		\\\arrayrulecolor{lightgray}\hline
		Program & Entry function  & Austin & CoverMe & Austin & CoverMe & Speedup &  Coverage (\%) \\\arrayrulecolor{black}\midrule
		e\_acos.c & ieee754\_acos(double) & 6058.8 & 7.8 & 16.7 & 100.0 & 776.4 & 83.3 \\\arrayrulecolor{black}\midrule
		e\_acosh.c & ieee754\_acosh(double) & 2016.4 & 2.3 & 40.0 & 90.0 & 887.5 & 50.0 \\\arrayrulecolor{black}\midrule
		e\_asin.c & ieee754\_asin(double) & 6935.6 & 8.0 & 14.3 & 92.9 & 867.0 & 78.6 \\\arrayrulecolor{black}\midrule
		e\_atan2.c & ieee754\_atan2(double, double) & 14456.0 & 17.4 & 34.1 & 63.6 & 831.2 & 29.6 \\\arrayrulecolor{black}\midrule
		e\_atanh.c & ieee754\_atanh(double) & 4033.8 & 8.1 & 8.3 & 91.7 & 495.4 & 83.3 \\\arrayrulecolor{black}\midrule
		e\_cosh.c & ieee754\_cosh(double) & 27334.5 & 8.2 & 37.5 & 93.8 & 3327.7 & 56.3 \\\arrayrulecolor{black}\midrule
		e\_exp.c & ieee754\_exp(double) & 2952.1 & 8.4 & 75.0 & 96.7 & 349.7 & 21.7 \\\arrayrulecolor{black}\midrule
		e\_fmod.c & ieee754\_frmod(double, double) & timeout & 22.1 & n/a & 70.0 & n/a & n/a \\\arrayrulecolor{black}\midrule
		e\_hypot.c & ieee754\_hypot(double, double) & 5456.8 & 15.6 & 36.4 & 90.9 & 350.9 & 54.6 \\\arrayrulecolor{black}\midrule
		e\_j0.c & ieee754\_j0(double) & 6973.0 & 9.0 & 33.3 & 94.4 & 776.5 & 61.1 \\\arrayrulecolor{black}\midrule
		& ieee754\_y0(double) & 5838.3 & 0.7 & 56.3 & 100.0 & 8243.5 & 43.8 \\\arrayrulecolor{black}\midrule
		e\_j1.c & ieee754\_j1(double) & 4131.6 & 10.2 & 50.0 & 93.8 & 403.9 & 43.8 \\\arrayrulecolor{black}\midrule
		& ieee754\_y1(double) & 5701.7 & 0.7 & 56.3 & 100.0 & 8411.0 & 43.8 \\\arrayrulecolor{black}\midrule
		e\_log.c & ieee754\_log(double) & 5109.0 & 3.4 & 59.1 & 90.9 & 1481.9 & 31.8 \\\arrayrulecolor{black}\midrule
		e\_log10.c & ieee754\_log10(double) & 1175.5 & 1.1 & 62.5 & 87.5 & 1061.3 & 25.0 \\\arrayrulecolor{black}\midrule
		e\_pow.c & ieee754\_pow(double, double) & timeout & 18.8 & n/a & 81.6 & n/a & n/a \\\arrayrulecolor{black}\midrule
		e\_rem\_pio2.c & ieee754\_rem\_pio2(double, double*) & timeout & 1.1 & n/a & 93.3 & n/a & n/a \\\arrayrulecolor{black}\midrule
		e\_remainder.c & ieee754\_remainder(double, double) & 4629.0 & 2.2 & 45.5 & 100.0 & 2146.5 & 54.6 \\\arrayrulecolor{black}\midrule
		e\_scalb.c & ieee754\_scalb(double, double) & 1989.8 & 8.5 & 57.1 & 92.9 & 233.8 & 35.7 \\\arrayrulecolor{black}\midrule
		e\_sinh.c & ieee754\_sinh(double) & 5534.8 & 0.6 & 35.0 & 95.0 & 9695.9 & 60.0 \\\arrayrulecolor{black}\midrule
		e\_sqrt.c & iddd754\_sqrt(double) & crash & 15.6 & n/a & 82.6 & n/a & n/a \\\arrayrulecolor{black}\midrule
		k\_cos.c & kernel\_cos(double, double) & 1885.1 & 15.4 & 37.5 & 87.5 & 122.6 & 50.0 \\\arrayrulecolor{black}\midrule
		s\_asinh.c & asinh(double) & 2439.1 & 8.4 & 41.7 & 91.7 & 290.8 & 50.0 \\\arrayrulecolor{black}\midrule
		s\_atan.c & atan(double) & 7584.7 & 8.5 & 26.9 & 88.5 & 890.6 & 61.6 \\\arrayrulecolor{black}\midrule
		s\_cbrt.c & cbrt(double) & 3583.4 & 0.4 & 50.0 & 83.3 & 9109.4 & 33.3 \\\arrayrulecolor{black}\midrule
		s\_ceil.c & ceil(double) & 7166.3 & 8.8 & 36.7 & 83.3 & 812.3 & 46.7 \\\arrayrulecolor{black}\midrule
		s\_cos.c & cos (double) & 669.4 & 0.4 & 75.0 & 100.0 & 1601.6 & 25.0 \\\arrayrulecolor{black}\midrule
		s\_erf.c & erf(double) & 28419.8 & 9.0 & 30.0 & 100.0 & 3166.8 & 70.0 \\\arrayrulecolor{black}\midrule
		& erfc(double) & 6611.8 & 0.1 & 25.0 & 100.0 & 62020.9 & 75.0 \\\arrayrulecolor{black}\midrule
		s\_expm1.c & expm1(double) & timeout & 1.1 & n/a & 97.6 & n/a & n/a \\\arrayrulecolor{black}\midrule
		s\_floor.c & floor(double) & 7620.6 & 10.1 & 36.7 & 83.3 & 757.8 & 46.7 \\\arrayrulecolor{black}\midrule
		s\_ilogb.c & ilogb(double) & 3654.7 & 8.3 & 16.7 & 75.0 & 438.7 & 58.3 \\\arrayrulecolor{black}\midrule
		s\_log1p.c & log1p(double) & 11913.7 & 9.9 & 61.1 & 88.9 & 1205.7 & 27.8 \\\arrayrulecolor{black}\midrule
		s\_logb.c & logb(double) & 1064.4 & 0.3 & 50.0 & 83.3 & 3131.8 & 33.3 \\\arrayrulecolor{black}\midrule
		s\_modf.c & modf(double, double*) & 1795.1 & 3.5 & 50.0 & 100.0 & 507.0 & 50.0 \\\arrayrulecolor{black}\midrule
		s\_nextafter.c & nextafter(double, double) & 7777.3 & 17.5 & 50.0 & 79.6 & 445.4 & 29.6 \\\arrayrulecolor{black}\midrule
		s\_rint.c & rint(double) & 5355.8 & 3.0 & 35.0 & 90.0 & 1808.3 & 55.0 \\\arrayrulecolor{black}\midrule
		s\_sin.c & sin (double) & 667.1 & 0.3 & 75.0 & 100.0 & 1951.4 & 25.0 \\\arrayrulecolor{black}\midrule
		s\_tan.c & tan(double) & 704.2 & 0.3 & 50.0 & 100.0 & 2701.9 & 50.0 \\\arrayrulecolor{black}\midrule
		s\_tanh.c & tanh(double) & 2805.5 & 0.7 & 33.3 & 100.0 & 4075.0 & 66.7 \\\arrayrulecolor{black}\bottomrule
		MEAN &  & 6058.4 & 6.9 & 42.8 & 90.8 &3868.0 & 48.9\\
		\arrayrulecolor{black}\bottomrule
	\end{tabular}
\end{table*}

Tab.~\ref{tab:eval:me_vs_austin} compares the results of CoverMe and
Austin. We use the same set of benchmarks as
Tab.~\ref{tab:eval:me_vs_random} (Col.~1-2).  We use the time
(Col.~3-4) and the branch coverage metric (Col. 5-6) to evaluate the
efficiency and the coverage.  The branch coverage of Austin (Col.~5)
is provided by Austin itself rather than by Gcov.  Gcov needs to have
access to the generated test inputs to report the coverage, but Austin
does not provide a viable way to access the generated test inputs.

Austin shows large performance variances over different benchmarks,
from 667.1 seconds ({\tt s\_sin.c}) to hours. As shown in the last row
of Tab.~\ref{tab:eval:me_vs_austin}, Austin needs 6058.4 seconds on
average for the testing. The average time does not include the
benchmarks where Austin crashes\footnote{Austin raised an exception  when testing {\tt e\_sqrt.c}. The exception
  was triggered by {\tt AustinOcaml/symbolic/symbolic.ml} from
  Austin's code, at line 209, Column 1.} or times out. Compared with
Austin, CoverMe is faster (Tab.~\ref{tab:eval:me_vs_austin}, Col.~4)
with 6.9 seconds on average.

CoverMe achieves a higher branch coverage (90.8\%) than Austin (42.8\%).  We also compare across Tab.~\ref{tab:eval:me_vs_austin} and Tab.~\ref{tab:eval:me_vs_random}. On average, Austin provides slightly higher branch coverage (42.8\%) than Rand (38.0\%), but  lower than  AFL (72.9\%). 

Col.~7-8 are the improvement metrics of CoverMe against Austin. We calculate the 
Speedup (Col.~7) as the ratio of the time spent by
Austin and the time spent by CoverMe, and  the coverage improvement
(Col.~7) as the difference between the branch coverage
of CoverMe and that of Austin.  We observe that CoverMe provides
3,868X speed-up and 48.9\% coverage improvement on average.


  \begin{remark}\label{remark:why}
    Three reasons contribute to CoverMe's effectiveness.  First,
    Thm.~\ref{lem:overview:c2} allows the search process to target the
    right test inputs only (each minimum point found in the
    minimization process corresponds to a new branch until all
    branches are saturated). Most random search techniques do not have
    such guarantee, so they may waste time searching for irrelevant
    test inputs. Second, SMT-based methods run into difficulties in
    certain kinds of programs, in particular, those with nonlinear
    arithmetic. As a result, it makes sense to use unconstrained
    programming in programs that are heavy on floating-point
    computation. In particular, the representing function for CoverMe
    is carefully designed to be smooth to some degree (\eg, the branch
    distances defined in Def.~\ref{def:overview:bd} are quadratic
    expressions), which allows CoverMe to leverage the power of local
    optimization and MCMC. Third, compared with symbolic execution
    based solutions, such as Austin, CoverMe invokes the unconstrained
    programming backend for minimizing a single representing function,
    whereas symbolic execution usually needs to invoke SMT solvers for
    solving a large number of paths.

    Since CoverMe has achieved high code coverage on most tested
    programs, one may wonder whether our generated inputs have
    triggered any latent bugs. Note that when no specifications are
    given, program crashes have frequently been used as an oracle for
    finding bugs in integer programs.  Floating-point programs, on the
    other hand, can silently produce wrong results without
    crashing. Thus, when testing floating-point programs, program
    crashes cannot be used as a simple, readily available oracle as
    for integer programs. Our experiments, therefore, have focused on
    assessing the effectiveness of CoverMe in solving the problem
    defined in Def.~\ref{def:overview:branchcoverage} and do not
    evaluate its effectiveness in finding bugs, which is orthogonal
    and exciting future work.

  \end{remark}

\section{Related Work}
\label{sect:relwork}

Many survey papers~\cite{McMinn:2004:SST:1077276.1077279,sharma2014survey,utting2012taxonomy}
have reviewed the algorithms and implementations for coverage-based
testing.






\Paragraph{Random Testing} 

The most generic automated testing solution may be to sample
from the input space randomly. Pure random testing is usually
ineffective if the input space is large, such as in the case of
testing floating-point programs.  Adaptive random
testing~\cite{chen2004adaptive} evenly spreads the randomly generated
test cases, which is motivated by the observation that neighboring
inputs often exhibit similar failure behavior.  AFL~\cite{afl:web} is
another improved random testing. Its sampling is aided with a set of
heuristics. AFL starts from a random seed and repeatedly mutates it
to inputs that can attain more program coverage.




\Paragraph{Symbolic Execution}
   
Most branch coverage based testing algorithms follow the pattern of
symbolic execution~\cite{King:1976:SEP:360248.360252}. It selects a
target path $\tau$, derives a path condition $\Phi_{\tau}$, and calculates
a model of the path condition with an SMT solver.
The symbolic execution  approach is attractive because of its
theoretical guarantee, that is, each model of the path condition
$\Phi_{\tau}$ necessarily exercises its associated target path
$\tau$. In practice, however, symbolic execution can be ineffective if
there are too many target paths (\aka path explosion), or if the SMT
backend has difficulties in handling the path condition.  When
analyzing floating-point programs, symbolic execution and its DSE
(Dynamic Symbolic Execution) variants~\cite{conf/cav/SenA06,song2013bitblaze,DBLP:conf/kbse/BurnimS08}
typically reduce floating-point SMT solving to Boolean satisfiability
solving~\cite{Peleska:2011:ATC:1986308.1986333}, or approximate the
constraints over floats by those over
rationals~\cite{Lakhotia:2010:FSF:1928028.1928039} or
reals~\cite{DBLP:dblp_conf/popl/BarrVLS13}.

Fig.~\ref{fig:relwork:se_coverme} compares symbolic execution
approaches and CoverMe. Their major difference lies in that, while
symbolic execution solves a path condition $\Phi_{\tau}$ with an SMT
backend for each target path $\tau$, CoverMe minimizes a \emph{single}
representing function \FOOR with unconstrained programming. Because of
this difference, CoverMe does not have path explosion issues and in
addition, it does not need to analyze program semantics.  Similar to
symbolic execution, CoverMe also comes with a theoretical guarantee,
namely, each minimum that attains 0 must trigger a new branch,
guarantee that contributes to its effectiveness
(Sect. \ref{sect:eval}).

\begin{figure}[t]
\centering
\includegraphics[width=1.0\linewidth]{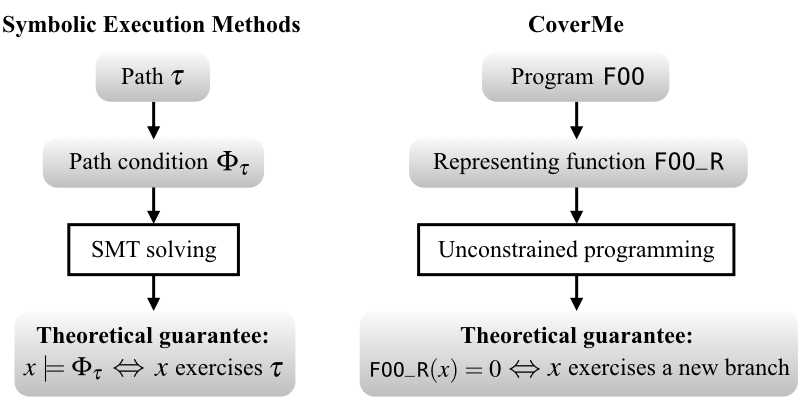}
\caption{Symbolic execution methods versus CoverMe.  }
\label{fig:relwork:se_coverme}
\end{figure}

\Paragraph{Search-based Testing}

Miller \etal~\cite{Miller:1976:AGF:1313320.1313530} probably are among
the first to introduce optimization methods in automated testing. They
reduce the problem of testing straight-line floating-point programs
into \emph{constrained programming}~\cite{Zoutendijk76,Minoux86}, that
is, optimization problems in which one needs to both minimize an objective
function and  satisfy a set of constraints.
Korel~\cite{Korel:1990:AST:101747.101755,Ferguson:1996:CAS:226155.226158}
extends Miller \etal's to general programs, leading to the development of
search-based testing~\cite{McMinn:2011:SST:2004685.2005495}. It views
a testing problem as a sequence of subgoals where each subgoal is
associated with a fitness function; the search process is then guided
by minimizing those fitness functions.  Search-based techniques have
been integrated into symbolic execution, such as
FloPSy~\cite{Lakhotia:2010:FSF:1928028.1928039} and
Austin~\cite{lakhotia2013austin}.


CoverMe's representing function is similar to the fitness function of
search-based testing in the sense that both reduce testing into
function minimization.  However, CoverMe uses the more efficient
unconstrained programming rather than the constrained programming used in
search-based testing. Moreover, CoverMe's use of representing function
comes up with a theoretical guarantee, which also opens the door to a
whole suite of optimization backends.

\section{Conclusion}
 \label{sect:conc}
 
We have introduced a new branch coverage based testing algorithm for
floating-point code.  We turn the challenge of testing floating-point
programs into the opportunity of applying unconstrained programming. Our core insight is to introduce the representing
function so that Thm.~\ref{lem:overview:c2} holds, which guarantees
that the minimum point of the representing function is an input that
exercises a new branch of the tested program.  We have implemented
this approach into the tool CoverMe. Our evaluation on Sun's math
library Fdlibm shows that CoverMe is highly effective, achieving
90.8\% of branch coverage in 6.9 seconds
on average, which largely outperforms random testing, AFL, and Austin.

For future work, we plan to investigate the potential synergies
 between CoverMe and symbolic execution, and extend this work to
 programs beyond floating-point code.

\Paragraph{Acknowledgment}
This research was supported in part by the United States National
Science Foundation (NSF) Grants 1319187, 1528133, and 1618158, and by
a Google Faculty Research Award. The information presented here does
not necessarily reflect the position or the policy of the Government
and no official endorsement should be inferred.  We would like to
thank our shepherd, Sasa Misailovic, and the anonymous reviewers for
valuable feedback on earlier drafts of this paper, which helped
improve its presentation.

\bibliographystyle{plain}
\bibliography{main}

\balance


\clearpage

 \appendix

\section{Untested programs in Fdlibm and Explanation}
 \label{sect:untested}
Tab.~\ref{tab:appendix:untested} lists all untested programs and
functions and explains the reasons. Three types
of the functions are excluded from the  evaluation.  They are (1)
functions without any branch, (2) functions input parameters that are not floating-point, and (3) static C functions.

\setlength{\tabcolsep}{6pt}
\begin{table*}\footnotesize
\caption{Untested programs  
in Fdlibm  and   the explanations.}
\centering
\begin{tabular}{l  l  |  l}
\toprule
File name & Function name &  Explanation \\\arrayrulecolor{black}\midrule
e\_gamma\_r.c & ieee754\_gamma\_r(double) & no branch \\\arrayrulecolor{lightgray}\hline
e\_gamma.c & ieee754\_gamma(double) & no branch \\\arrayrulecolor{lightgray}\hline
e\_j0.c & pzero(double) & static C function \\\arrayrulecolor{lightgray}\hline
 & qzero(double) & static C function \\\arrayrulecolor{lightgray}\hline
e\_j1.c & pone(double) & static C function \\\arrayrulecolor{lightgray}\hline
 & qone(double) & static C function \\\arrayrulecolor{lightgray}\hline
e\_jn.c & ieee754\_jn(int, double) & unsupported input type \\\arrayrulecolor{lightgray}\hline
 & ieee754\_yn(int, double) & unsupported input type \\\arrayrulecolor{lightgray}\hline
e\_lgamma\_r.c & sin\_pi(double) & static C function \\\arrayrulecolor{lightgray}\hline
 & ieee754\_lgammar\_r(double, int*) & unsupported input type \\\arrayrulecolor{lightgray}\hline
e\_lgamma.c & ieee754\_lgamma(double) & no branch \\\arrayrulecolor{lightgray}\hline
k\_rem\_pio2.c & kernel\_rem\_pio2(double*, double*, int, int, const int*) & unsupported input type \\\arrayrulecolor{lightgray}\hline
k\_sin.c & kernel\_sin(double, double, int) & unsupported input type \\\arrayrulecolor{lightgray}\hline
k\_standard.c & kernel\_standard(double, double, int) & unsupported input type \\\arrayrulecolor{lightgray}\hline
k\_tan.c & kernel\_tan(double, double, int) & unsupported input type \\\arrayrulecolor{lightgray}\hline
s\_copysign.c & copysign(double) & no branch \\\arrayrulecolor{lightgray}\hline
s\_fabs.c & fabs(double) & no branch \\\arrayrulecolor{lightgray}\hline
s\_finite.c & finite(double) & no branch \\\arrayrulecolor{lightgray}\hline
s\_frexp.c & frexp(double,  int*) & unsupported input type \\\arrayrulecolor{lightgray}\hline
s\_isnan.c & isnan(double) & no branch \\\arrayrulecolor{lightgray}\hline
s\_ldexp.c & ldexp(double, int) & unsupported input type \\\arrayrulecolor{lightgray}\hline
s\_lib\_version.c & lib\_versioin(double) & no branch \\\arrayrulecolor{lightgray}\hline
s\_matherr.c & matherr(struct exception*) & unsupported input type \\\arrayrulecolor{lightgray}\hline
s\_scalbn.c & scalbn(double, int) & unsupported input type \\\arrayrulecolor{lightgray}\hline
s\_signgam.c & signgam(double) & no branch \\\arrayrulecolor{lightgray}\hline
s\_significand.c & significand(double) & no branch \\\arrayrulecolor{lightgray}\hline
w\_acos.c & acos(double) & no branch \\\arrayrulecolor{lightgray}\hline
w\_acosh.c & acosh(double) & no branch \\\arrayrulecolor{lightgray}\hline
w\_asin.c & asin(double) & no branch \\\arrayrulecolor{lightgray}\hline
w\_atan2.c & atan2(double, double) & no branch \\\arrayrulecolor{lightgray}\hline
w\_atanh.c & atanh(double) & no branch \\\arrayrulecolor{lightgray}\hline
w\_cosh.c & cosh(double) & no branch \\\arrayrulecolor{lightgray}\hline
w\_exp.c & exp(double) & no branch \\\arrayrulecolor{lightgray}\hline
w\_fmod.c & fmod(double, double) & no branch \\\arrayrulecolor{lightgray}\hline
w\_gamma\_r.c & gamma\_r(double, int*) & no branch \\\arrayrulecolor{lightgray}\hline
w\_gamma.c & gamma(double, int*) & no branch \\\arrayrulecolor{lightgray}\hline
w\_hypot.c & hypot(double, double) & no branch \\\arrayrulecolor{lightgray}\hline
w\_j0.c & j0(double) & no branch \\\arrayrulecolor{lightgray}\hline
 & y0(double) & no branch \\\arrayrulecolor{lightgray}\hline
w\_j1.c & j1(double) & no branch \\\arrayrulecolor{lightgray}\hline
 & y1(double) & no branch \\\arrayrulecolor{lightgray}\hline
w\_jn.c & jn(double) & no branch \\\arrayrulecolor{lightgray}\hline
 & yn(double) & no branch \\\arrayrulecolor{lightgray}\hline
w\_lgamma\_r.c & lgamma\_r(double, int*) & no branch \\\arrayrulecolor{lightgray}\hline
w\_lgamma.c & lgamma(double) & no branch \\\arrayrulecolor{lightgray}\hline
w\_log.c & log(double) & no branch \\\arrayrulecolor{lightgray}\hline
w\_log10.c & log10(double) & no branch \\\arrayrulecolor{lightgray}\hline
w\_pow.c & pow(double, double) & no branch \\\arrayrulecolor{lightgray}\hline
w\_remainder.c & remainder(double, double) & no branch \\\arrayrulecolor{lightgray}\hline
w\_scalb.c & scalb(double, double) & no branch \\\arrayrulecolor{lightgray}\hline
w\_sinh.c & sinh(double) & no branch \\\arrayrulecolor{lightgray}\hline           
w\_sqrt.c & sqrt(double) & no branch \\\arrayrulecolor{black}\bottomrule
\end{tabular}
\label{tab:appendix:untested}
\end{table*}

\section{AFL Settings}
\label{sect:aflsettings} 
We run AFL following the instructions from its {\tt readme}
file~\footnote{\url{http://lcamtuf.coredump.cx/afl/README.txt}}: We first 
instrument the source program  using {\tt afl-gcc}, \footnote{We use
  AFL-clang, the alternative of {\tt afl-gcc}, for programs {\tt e\_j0.c} and
  {\tt e\_j1.c}.  {\tt afl-gcc} crashes on the two due a linking issue; we are
  unsure whether this is due to bugs in {\tt afl-gcc} or our misusing it.}
then we generate the test data using  {\tt afl-fuzz}. We terminate {\tt afl-fuzz} when it spends ten times of the CoverMe time.

We show our  test harness programs when we run AFL on Fdlibm
programs.  Test harness for Fdlibm programs that accept a single {\tt
  double} parameter, \eg, {\tt e\_acos.c}: 

\lstset{xleftmargin=0.5cm, numbers=none}
\begin{lstlisting}
#include <stdio.h>
//LCOV_EXCL_START
int main(int argc, char *argv[])
{
  double x = 0;
  scanf("%lf", &x);
  fdlibm_program(x);
  return 0;
}
//LCOV_EXCL_STOP 
\end{lstlisting}

Test harness for Fdlibm programs  that accept two  {\tt  double} parameters, \eg, {\tt e\_atan2.c}:

\begin{lstlisting}
#include <stdio.h>
//LCOV_EXCL_START
int main(int argc, char *argv[])
{
  double x = 0;
  double y = 0;
  scanf("%lf %lf", &x, &y);
  fdlibm_program(x,y);
  return 0;
}
//LCOV_EXCL_STOP 
\end{lstlisting}

Test harness for Fdlibm programs that accept a {\tt  double} and a {\tt double*} parameters, \eg, {\tt e\_rem\_pio2.c}:

\begin{lstlisting}
#include <stdio.h>
//LCOV_EXCL_START
int main(int argc, char *argv[])
{
  double x = 0;
  double y = 0;
  scanf("%lf %lf", &x, &y);
  fdlibm_program(x,&y);
  return 0;
}
//LCOV_EXCL_STOP 
\end{lstlisting}

Above,  the {\tt LCOV\_EXCL\_START} and {\tt LCOV\_EXCL\_STOP} parts 
mark the beginning and the end of an excluded section. It allows 
the coverage analysis tool GCov~\cite{gcov:web} to ignore the lines in
between when reporting the coverage data.


\section{Line Coverage Results}
\label{sect:linecoverage}
The algorithm presented in the paper focuses on branch
coverage. CoverMe can also achieve high line coverage.  The line
coverage refers to the percentage of source lines covered by the
generated test data.  Tab.~\ref{tab:eval:line} provides the line
coverage results of CoverMe, AFL, and Rand.  The line coverage of
CoverMe, AFL, and Rand are shown in Col. 7, 8, 9 respectively. The
time is the same as in Tab.~\ref{tab:eval:me_vs_random}.  In
particular, CoverMe has achieved 100\% line coverage for 28 out of 40
tested functions. On average, CoverMe, AFL, and Rand get line coverage
of 97.0\%, 87.0\%, and 54.2\%, respectively.

\setlength{\tabcolsep}{1pt}
\begin{table*}[p]\centering \footnotesize             
  \caption {Line Coverage: CoverMe versus Rand and AFL.  The benchmark programs are taken from Fdlibm~\cite{fdlibm:web}. We calculate the coverage percentage using Gcov~\cite{gcov:web}.
  } \label{tab:eval:line}
  \begin{tabular}{llr|rrr|rrr|rr}
\toprule
    \multicolumn{3}{c|}{Benchmark}&\multicolumn{3}{c|}{Time (s)}&\multicolumn{3}{c|}{Line  (\%)} & \multicolumn{2}{c|}{improvement (\%)}  
		\\\arrayrulecolor{lightgray}\hline
File & Function & \#Lines & Rand  & AFL & CoverMe & Rand   & AFL & CoverMe & CoverMe vs. Rand & CoverMe vs. AFL
		\\\arrayrulecolor{black}\midrule
e\_acos.c & ieee754\_acos(double) & 33 & 78 & 78 & 7.8 & 18.2 & 100.0 & 100.0 & 81.8 & 0.0 		\\\arrayrulecolor{black}\midrule
e\_acosh.c & ieee754\_acosh(double) & 15 & 23 & 23 & 2.3 & 46.7 & 100.0 & 93.3 & 46.7 & -6.7 		\\\arrayrulecolor{black}\midrule
e\_asin.c & ieee754\_asin(double) & 31 & 80 & 80 & 8.0 & 19.4 & 96.8 & 100.0 & 80.7 & 3.2 		\\\arrayrulecolor{black}\midrule
e\_atan2.c & ieee754\_atan2(double,double) & 39 & 174 & 174 & 17.4 & 59.0 & 97.4 & 79.5 & 20.5 & -17.9 		\\\arrayrulecolor{black}\midrule
e\_atanh.c & ieee754\_atanh(double) & 15 & 81 & 81 & 8.1 & 40.0 & 86.7 & 100.0 & 60.0 & 13.3 		\\\arrayrulecolor{black}\midrule
e\_cosh.c & ieee754\_cosh(double) & 20 & 82 & 82 & 8.2 & 50.0 & 95.0 & 100.0 & 50.0 & 5.0 		\\\arrayrulecolor{black}\midrule
e\_exp.c & ieee754\_exp(double) & 31 & 84 & 84 & 8.4 & 25.8 & 93.5 & 96.8 & 71.0 & 3.2 		\\\arrayrulecolor{black}\midrule
e\_fmod.c & ieee754\_fmod(double,double) & 70 & 221 & 221 & 22.1 & 54.3 & 60.0 & 77.1 & 22.9 & 17.1 		\\\arrayrulecolor{black}\midrule
e\_hypot.c & ieee754\_hypot(double,double) & 50 & 156 & 156 & 15.6 & 66.0 & 58.0 & 100.0 & 34.0 & 42.0 		\\\arrayrulecolor{black}\midrule
e\_j0.c & ieee754\_j0(double) & 29 & 90 & 90 & 9.0 & 55.2 & 100.0 & 100.0 & 44.8 & 0.0 		\\\arrayrulecolor{black}\midrule
 & ieee754\_y0(double) & 26 & 7 & 7 & 0.7 & 69.2 & 96.2 & 100.0 & 30.8 & 3.8 		\\\arrayrulecolor{black}\midrule
e\_j1.c & ieee754\_j1(double) & 26 & 102 & 102 & 10.2 & 65.4 & 100.0 & 100.0 & 34.6 & 0.0 		\\\arrayrulecolor{black}\midrule
 & ieee754\_y1(double) & 26 & 7 & 7 & 0.7 & 69.2 & 96.2 & 100.0 & 30.8 & 3.8 		\\\arrayrulecolor{black}\midrule
e\_log.c & ieee754\_log(double) & 39 & 34 & 34 & 3.4 & 87.7 & 87.2 & 100.0 & 12.3 & 12.8 		\\\arrayrulecolor{black}\midrule
e\_log10.c & ieee754\_log10(double) & 18 & 11 & 11 & 1.1 & 83.3 & 88.9 & 100.0 & 16.7 & 11.1 		\\\arrayrulecolor{black}\midrule
e\_pow.c & ieee754\_pow(double,double) & 139 & 188 & 188 & 18.8 & 15.8 & 92.1 & 92.7 & 76.9 & 0.6 		\\\arrayrulecolor{black}\midrule
e\_rem\_pio2.c & ieee754\_rem\_pio2(double,double*) & 64 & 11 & 11 & 1.1 & 29.7 & 82.8 & 92.2 & 62.5 & 9.4 		\\\arrayrulecolor{black}\midrule
e\_remainder.c & ieee754\_remainder(double,double) & 27 & 22 & 22 & 2.2 & 77.8 & 77.8 & 100.0 & 22.2 & 22.2 		\\\arrayrulecolor{black}\midrule
e\_scalb.c & ieee754\_scalb(double,double) & 9 & 85 & 85 & 8.5 & 66.7 & 77.8 & 100.0 & 33.3 & 22.2 		\\\arrayrulecolor{black}\midrule
e\_sinh.c & ieee754\_sinh(double) & 19 & 6 & 6 & 0.6 & 57.9 & 84.2 & 100.0 & 42.1 & 15.8 		\\\arrayrulecolor{black}\midrule
e\_sqrt.c & iddd754\_sqrt(double) & 68 & 156 & 156 & 15.6 & 85.3 & 85.3 & 94.1 & 8.8 & 8.8 		\\\arrayrulecolor{black}\midrule
k\_cos.c & kernel\_cos(double,double) & 15 & 154 & 154 & 15.4 & 73.3 & 100.0 & 100.0 & 26.7 & 0.0 		\\\arrayrulecolor{black}\midrule
s\_asinh.c & asinh(double) & 14 & 84 & 84 & 8.4 & 57.1 & 100.0 & 100.0 & 42.9 & 0.0 		\\\arrayrulecolor{black}\midrule
s\_atan.c & atan(double) & 28 & 85 & 85 & 8.5 & 25.0 & 25.0 & 96.4 & 71.4 & 71.4 		\\\arrayrulecolor{black}\midrule
s\_cbrt.c & cbrt(double) & 24 & 4 & 4 & 0.4 & 87.5 & 91.7 & 91.7 & 4.2 & 0.0 		\\\arrayrulecolor{black}\midrule
s\_ceil.c & ceil(double) & 29 & 88 & 88 & 8.8 & 27.6 & 100.0 & 100.0 & 72.4 & 0.0 		\\\arrayrulecolor{black}\midrule
s\_cos.c & cos (double) & 12 & 4 & 4 & 0.4 & 100.0 & 100.0 & 100.0 & 0.0 & 0.0 		\\\arrayrulecolor{black}\midrule
s\_erf.c & erf(double) & 38 & 90 & 90 & 9.0 & 21.1 & 78.9 & 100.0 & 79.0 & 21.1 		\\\arrayrulecolor{black}\midrule
 & erfc(double) & 43 & 1 & 1 & 0.1 & 18.6 & 95.3 & 100.0 & 81.4 & 4.7 		\\\arrayrulecolor{black}\midrule
s\_expm1.c & expm1(double) & 56 & 11 & 11 & 1.1 & 21.4 & 94.6 & 100.0 & 78.6 & 5.4 		\\\arrayrulecolor{black}\midrule
s\_floor.c & floor(double) & 30 & 101 & 101 & 10.1 & 26.7 & 100.0 & 100.0 & 73.3 & 0.0 		\\\arrayrulecolor{black}\midrule
s\_ilogb.c & ilogb(double) & 12 & 83 & 83 & 8.3 & 33.3 & 50.0 & 91.7 & 58.3 & 41.7 		\\\arrayrulecolor{black}\midrule
s\_log1p.c & log1p(double) & 46 & 99 & 99 & 9.9 & 71.7 & 91.3 & 100.0 & 28.3 & 8.7 		\\\arrayrulecolor{black}\midrule
s\_logb.c & logb(double) & 8 & 3 & 3 & 0.3 & 87.5 & 50.0 & 87.5 & 0.0 & 37.5 		\\\arrayrulecolor{black}\midrule
s\_modf.c & modf(double, double*) & 32 & 35 & 35 & 3.5 & 31.2 & 78.1 & 100.0 & 68.8 & 21.9 		\\\arrayrulecolor{black}\midrule
s\_nextafter.c & nextafter(double,double) & 36 & 175 & 175 & 17.5 & 72.2 & 88.9 & 88.9 & 16.7 & 0.0 		\\\arrayrulecolor{black}\midrule
s\_rint.c & rint(double) & 34 & 30 & 30 & 3.0 & 26.5 & 94.1 & 100.0 & 73.5 & 5.9 		\\\arrayrulecolor{black}\midrule
s\_sin.c & sin (double) & 12 & 3 & 3 & 0.3 & 100.0 & 100.0 & 100.0 & 0.0 & 0.0 		\\\arrayrulecolor{black}\midrule
s\_tan.c & tan(double) & 8 & 3 & 3 & 0.3 & 100.0 & 100.0 & 100.0 & 0.0 & 0.0 		\\\arrayrulecolor{black}\midrule
s\_tanh.c & tanh(double) & 16 & 7 & 7 & 0.7 & 43.8 & 87.5 & 100.0 & 56.3 & 12.5 		\\\arrayrulecolor{black}\midrule
MEAN &  & 32 & 69 & 69 & 6.9 & 54.2 & 87.0 & 97.0 & 42.9 & 10.0 
\\\arrayrulecolor{black}\bottomrule
\end{tabular}
\end{table*}


\clearpage
\section{Illustration of  the Incompleteness}
\label{sect:incompleteness}

This section discusses two benchmark programs where CoverMe did not attain full coverage. The first one is related to infeasible branches; the second one is due to the weakness of the unconstrained programming backend Basinhopping in sampling subnormal numbers.

\Paragraph{Missed branches in \texttt{k\_cos.c}} 

Fig.~\ref{fig:notfullcoverage:kcos} lists one of the smallest programs in our
tested benchmarks, {\tt k\_cos.c}. The program operates on two {\tt
  double} inputs. It first takes {\tt |x|}'s high word by bit
twiddling, including a bitwise {\tt AND} ($\&$), a point reference
(\&) and a dereference (*) operator. The bit twiddling result is then
stored in an integer variable {\tt ix} (Line 3), following which four
conditional statements depend on {\tt ix} (Lines 4-15).  As shown in
Tab.~\ref{tab:eval:me_vs_random}, CoverMe attained $87.5\%$ branch coverage.

By investigating the reason of the partial coverage, we found that the
program had one out of the eight branches that could not be reached:
In Line 5, the condition {\tt if ((int) x) == 0)} always holds because
it is nested within the $|x|<=2^{-27}$ branch. (We presume that Sun's
developers decided to use this statement to trigger the floating-point
\emph{inexact exception} as a side effect.) Since no inputs of {\tt
  \_\_kernel\_cos} can trigger the opposite branch of {\tt if (((int)
  x) == 0)}, the $87.5\%$ branch coverage is in fact optimal.

\Paragraph{Missed branches in \texttt{e\_fmod.c}} 
Fig.~\ref{fig:notfullcoverage:fmod} lists a segment of program {\tt
  e\_fmod.c} from line 57 to 72.  As shown in
Tab. ~\ref{tab:eval:me_vs_austin},  CoverMe covered 28 out of the 44 branches in the program.  

The comment is copied from the original program directly.  Line 57
(resp. Line 66) is a condition that can be triggered only if its input
$x$ (resp. $y$) is subnormal.  With our current settings specified in
Sect.~\ref{subsect:expsetup}, CoverMe could not generate any subnormal
number because its optimization backend could not.  Thus, CoverMe
missed the two branches in Lines 57 and 66 and all the nested
branches, which are 12 in total.

\begin{figure}[htp]
\lstset{xleftmargin=0.5cm, numbers=left}
\begin{lstlisting}
#define __HI(x) *(1+(int*)&x)
double __kernel_cos(double x, double y){
  ix = __HI(x)&0x7fffffff;  /* ix = |x|'s high word */
  if(ix<0x3e400000) {       /* if |x| < 2**(-27) */
     if(((int)x)==0) return ...; /* generate inexact */
  }
  ...;
  if(ix < 0x3FD33333) 	  /* if |x| < 0.3 */ 
    return ...;
  else {
    if(ix > 0x3fe90000) { /* if |x| > 0.78125 */
      ...;
    } else {
      ...;
    }
    return ...;
  }
}
\end{lstlisting}
\caption{Benchmark program   {\tt k\_cos.c}.}\label{fig:notfullcoverage:kcos}
\end{figure}

\begin{figure}[htp]
\lstset{xleftmargin=0.5cm, numbers=left}
\begin{lstlisting}[firstnumber=57]
if(hx<0x00100000) {	/* subnormal x */
  if(hx==0) {
    for (ix = -1043, i=lx; i>0; i<<=1) ix -=1;
  } else {
    for (ix = -1022,i=(hx<<11); i>0; i<<=1) ix -=1;
  }
} else ix = (hx>>20)-1023;

/* determine iy = ilogb(y) */
if(hy<0x00100000) {	/* subnormal y */
  if(hy==0) {
    for (iy = -1043, i=ly; i>0; i<<=1) iy -=1;
  } else {
    for (iy = -1022,i=(hy<<11); i>0; i<<=1) iy -=1;
  }
} else iy = (hy>>20)-1023;
\end{lstlisting}
\caption{Benchmark program {\tt e\_fmod.c}, Lines 57-72.}\label{fig:notfullcoverage:fmod} 
\end{figure}


\end{document}
